\newtheorem{theorem}{Theorem}
\begin{document}
\title{Content Censorship in the\\InterPlanetary File System}

\author{
  \IEEEauthorblockN{Srivatsan Sridhar\IEEEauthorrefmark{1}, Onur Ascigil\IEEEauthorrefmark{2}, Navin Keizer\IEEEauthorrefmark{5}, François Genon\IEEEauthorrefmark{3}\\ Sébastien Pierre\IEEEauthorrefmark{3}, Yiannis Psaras\IEEEauthorrefmark{6}, Etienne Rivière\IEEEauthorrefmark{3}, Michał Król\IEEEauthorrefmark{4}}
  
  \vspace{+0.06in}

  \IEEEauthorblockA{
    \begin{tabular}{ccc}
      \IEEEauthorrefmark{1}Stanford University & \IEEEauthorrefmark{2}Lancaster University & \IEEEauthorrefmark{5}University College London \\
      \IEEEauthorrefmark{3}ICTEAM, UCLouvain & \IEEEauthorrefmark{6}Protocol Labs & \IEEEauthorrefmark{4}City, University of London \\
    \end{tabular}
  }
}

\IEEEoverridecommandlockouts
\makeatletter\def\@IEEEpubidpullup{6.5\baselineskip}\makeatother
\IEEEpubid{\parbox{\columnwidth}{
    Network and Distributed System Security (NDSS) Symposium 2024\\
    26 February - 1 March 2024, San Diego, CA, USA\\
    ISBN 1-891562-93-2\\
    https://dx.doi.org/10.14722/ndss.2024.23153\\
    www.ndss-symposium.org
}
\hspace{\columnsep}\makebox[\columnwidth]{}}

\maketitle

\newboolean{showcomments}
\setboolean{showcomments}{true}

\ifthenelse{\boolean{showcomments}}
{ \newcommand{\mynote}[3]{
    \protect\fbox{\bfseries\sffamily\scriptsize#1}
    {\small\textsf{\emph{\color{#3}{#2}}}}}}
{ \newcommand{\mynote}[3]{}}

\newcommand{\er}[1]{\mynote{Etienne}{#1}{blue}}
\newcommand{\michal}[1]{\mynote{Michał}{#1}{brown}}
\newcommand{\yiannis}[1]{\mynote{Yiannis}{#1}{orange}}
\newcommand{\srivatsan}[1]{\mynote{Srivatsan}{#1}{green}}
\newcommand{\navin}[1]{\mynote{Navin}{#1}{violet}}
\newcommand{\onur}[1]{\mynote{Onur}{#1}{red}}

\newcommand\attacker{attacker\xspace}
\newcommand\adversary{attacker\xspace}
\newcommand\prerecorded{prerecorded\xspace}
\newcommand\prerecords{prerecords\xspace}
\newcommand\prerecord{prerecord\xspace}
\newcommand\bitswap{\emph{Bitswap}\xspace}

\newcommand{\cf}{cf.\@\xspace}
\newcommand{\vs}{vs.\@\xspace}
\newcommand{\etc}{etc.\@\xspace}
\newcommand{\ala}{ala\@\xspace}
\newcommand{\wrt}{w.r.t.\@\xspace}
\newcommand{\etal}{\textit{et al.}\@\xspace}
\newcommand{\eg}{\textit{e.g.}\@\xspace}
\newcommand{\Eg}{\textit{E.g.}\@\xspace}
\newcommand{\ie}{\textit{i.e.}\@\xspace}
\newcommand{\Ie}{\textit{I.e.}\@\xspace}
\newcommand{\via}{\textit{via}\@\xspace}
\newcommand{\defacto}{\textit{de facto}\@\xspace}

\newcommand\mypara[1]{\vspace{0.05in} \noindent \textbf{#1.}}
\newcommand\para[1]{\vspace{0.05in} \noindent \textbf{#1.}}

\newcommand\definition[2]{\ding{118}\xspace \textsf{#1}\xspace$\bm{\rightarrow}$\xspace(#2):\xspace}

\def\downloader{downloader\xspace}
\def\Downloader{Downloader\xspace}
\def\downloaders{downloaders\xspace}
\def\Downloaders{Downloaders\xspace}
\def\provider{provider\xspace}
\def\providers{providers\xspace}
\def\resolver{resolver\xspace}
\def\resolvers{resolvers\xspace}

\newcommand\inlinesection[1]{{\bf #1.}}

\def\first{({\it i})\xspace}
\def\second{({\it ii})\xspace}
\def\third{({\it iii})\xspace}
\def\fourth{({\it iv})\xspace}
\def\fifth{({\it v})\xspace}
\def\sixth{({\it vi})\xspace}

\newcommand{\one}{({\em i})\xspace}
\newcommand{\two}{({\em ii})\xspace}
\newcommand{\three}{({\em iii})\xspace}
\newcommand{\four}{({\em iv})\xspace}
\newcommand{\five}{({\em v})\xspace}

\definecolor{verylightgray}{gray}{0.8}

\newcolumntype{L}{l<{\hspace{1cm}}}
\newcolumntype{C}{c<{\hspace{1cm}}}
\newcolumntype{D}{c<{\hspace{0.3cm}}}

\newcommand\vgap{\vskip 2ex}
\newcommand\marker{\vgap\ding{118}\xspace}

\def\na{--}
\def\unsure{?}
\def\missing{$!$}
\newcommand{\yes}{\ding{51}}
\newcommand{\no}{\ding{55}}
\DeclareRobustCommand\pie[1]{
\tikz[every node/.style={inner sep=0,outer sep=0, scale=1.5}]{
\node[minimum size=1.5ex] at (0,-1.5ex) {}; 
 \draw[fill=white] (0,-1.5ex) circle (0.75ex); \draw[fill=black] (0.75ex,-1.5ex) arc (0:#1:0.75ex); 
}
}
\def\L{\pie{0}} %
\def\M{\pie{-180}} %
\def\H{\pie{360}} %

\newcommand{\cmark}{\ding{51}}%
\newcommand{\xmark}{\ding{55}}%

\newcommand{\peerid}{\ensuremath{\mathsf{peerid}}}
\newcommand{\cid}{\ensuremath{\mathsf{cid}}}
\newcommand{\key}{\ensuremath{\mathsf{key}}}
\newcommand{\threshold}{\ensuremath{\mathsf{thr}}}

\newcommand{\numSybils}{\ensuremath{e}}
\newcommand{\cgen}{\ensuremath{c_{\mathrm{gen}}}}
\newcommand{\twarmup}{\ensuremath{t_{\mathrm{w}}}}
\newcommand{\teff}{\ensuremath{t_{\mathrm{eff}}}}
\newcommand{\coper}{\ensuremath{c_{\mathrm{oper}}}}
\newcommand{\catt}{\ensuremath{c_{\mathrm{att}}}}
\newcommand{\aeff}{\ensuremath{a_{\mathrm{eff}}}}
\newcommand{\meff}{\ensuremath{m_{\mathrm{eff}}}}

\newcommand{\algvar}[1]{\ensuremath{\mathsf{#1}}}
\newcommand{\CPL}{\algvar{CPL}}

\newcommand{\vspacebeforesection}{}
\newcommand{\vspaceaftercaption}{}

\newcommand{\artifactDOI}{10.5281/zenodo.8300034}
\newcommand{\artifactlink}{https://doi.org/10.5281/zenodo.8300034}
\newcommand{\githublink}{https://github.com/ssrivatsan97/go-libp2p-kad-dht}
\begin{abstract}
  The InterPlanetary File System (IPFS) is currently the largest decentralized storage solution in operation, with thousands of active participants and millions of daily content transfers. IPFS is used as remote data storage for numerous blockchain-based smart contracts, Non-Fungible Tokens (NFT), and decentralized applications.

  We present a content censorship attack that can be executed with minimal effort and cost, and that prevents the retrieval of any chosen content in the IPFS network. 
  The attack exploits a conceptual issue in a core component of IPFS, the Kademlia Distributed Hash Table (DHT), which is used to resolve content IDs to peer addresses.
  We provide efficient detection and mitigation mechanisms for this vulnerability. 
  Our mechanisms achieve a 99.6\% detection rate and mitigate 100\% of the detected attacks with minimal signaling and computational overhead.
  We followed responsible disclosure procedures, 
  and our countermeasures are scheduled for deployment in the future versions of IPFS.
\end{abstract}

\vspacebeforesection
\section{Introduction}
Inter-Planetary FileSystem (IPFS) is the largest decentralized peer-to-peer filesystem currently in operation. The platform underpins various decentralized web applications~\cite{ecosystem}, including social networking and discussion (Discussify~\cite{discussify}, Matters News~\cite{matters}), data storage (Space~\cite{Space}, Peergos~\cite{Peergos}, Temporal~\cite{temporal}), content search (Almonit~\cite{almonit}, Deece~\cite{deece}), messaging (Berty~\cite{Berty}), content streaming (Audius~\cite{Audius}, Watchit~\cite{Watchit}, DTube~\cite{dtube}), gaming (Gala~\cite{gala}, Splinterlands~\cite{Splinterlands}), and e-commerce (Ethlance~\cite{ethlance}, dClimate~\cite{dClimate}). IPFS is widely used as external storage for blockchain-based applications, including valuable NFT platforms. Support for accessing IPFS has further been integrated into HTTP gateways (\eg, Cloudflare) and mainstream browsers such as Opera and Brave, allowing easy uptake. 
The IPFS network currently contains a steady number of 25,000 online nodes, spread across 2,700 Autonomous Systems and 152 countries, according to a recent study~\cite{trautwein2022design} that also observed widespread usage by clients with 7.1 million content retrieval operations observed from a single vantage point and during a single day.

IPFS is a content-centric network where each piece of content is identified by a Content Identifier (CID), similarly to BitTorrent~\cite{bittorrent} or Content-Centric Networking~\cite{zhang2014named}.
CIDs are derived by hashing the content and do not embed any network location information. %
Such an approach enables easy content deduplication and the retrieval of data from the closest available location, in addition to maintaining data integrity.
Data retrieval in a content-centric network requires mapping CIDs into network identifiers (\ie IP addresses and port numbers) of nodes hosting the content, called \emph{\providers}. 
Without this resolution mechanism, nodes willing to fetch data, or \emph{\downloaders}, have no means to know where to send their requests for data. 
The design of IPFS results from decades of research on how to build efficient P2P systems~\cite{lua2005survey,androutsellis2004survey}.
It uses resolution based on a Distributed Hash Table (DHT) combined with Bitswap, a flooding-based, unstructured search mechanism. Similarly to systems such as Gnutella~\cite{ripeanu2001peer}, \downloaders use Bitswap to establish connections to random peers in the network and send them content queries. Bitswap acts as a lightweight cache and speeds up the retrieval of popular content, but cannot provide discovery guarantees, in particular for newer or less popular data.

Reliable content discovery is provided by the DHT-based resolution system. Nodes hosting content advertise themselves as \providers in the network. First, they create \emph{provider records} linking their hosted content (identified by CIDs) to their network location (\ie, IP address and port number). Second, the \providers send the provider records to be stored on a fixed number of designated nodes. We refer to those nodes as \emph{\resolvers}.  
\Downloaders wishing to fetch the content contact the same \resolvers, retrieve the relevant provider records, and then directly contact the discovered \providers to download the data. The DHT guarantees to find the content if it is stored in the network.
Its proper operation is, therefore, of paramount importance to ensure content availability.
IPFS uses the \texttt{libp2p} implementation~\cite{libp2p_github} of the Kademlia DHT~\cite{maymounkov2002kademlia}.

\para{Contributions}
We make four main contributions.

First, we present a content censorship attack targeting the main IPFS DHT-based resolution system.
The attack relies on strategically placing Sybil identities in the network so that they replace honest \resolvers for a given CID.
As a result, \downloaders cannot discover \provider records for the target CID and are unable to download the content.
The attack can be performed from a single, resource-constrained machine at very little cost (\$4 using AWS) and makes the \provider records unavailable after a time that ranges from a few seconds to up to 48h depending on the initial setup.
Currently, IPFS has no mechanisms to counter the attack, threatening the 
security of systems using IPFS as a storage platform.
This includes collaborative file hosting solutions such as Filecoin~\cite{psaras2020interplanetary} and systems building upon it~\cite{huang2020secure, de2021accelerating}.
It also concerns the many proposals combining IPFS for storage with blockchain-hosted application logic, e.g., to implement social networks~\cite{xu2018building}, domain-specific data sharing applications~\cite{jianjun2020research,mukne2019land}, or decentralized equivalents to centralized services such as ride-sharing~\cite{hossan2021securing}.

Second, we present a reliable attack detection technique that analyzes the distribution of peer IDs in the network using the KL Divergence metric~\cite{g-test}. 
This method extends previous work~\cite{cholez2010detection} and leverages a local density-based network size estimator~\cite{eli-sohl-dht-size-estimation,kostoulas2005decentralized, manku2003symphony} to automatically adjust the detection to the dynamic size of the IPFS network.
The detection allows \providers to execute mitigation techniques, which may be more costly than the default mode, only when an attack is detected, thereby  minimizing the overhead when there is no attack. 
The detection can be performed by any node during regular content resolution operations and does not incur any additional message overhead.
In our experiments on the live IPFS network, our attack detection method was able to detect 99\% of the attacks with a false positive rate of 4\%, while allowing users to trade off these rates based on individual preferences.
A higher detection rate ensures better security while also leading to more false positives that increase the overhead when there is no attack.

Third, we introduce a mitigation technique that allows us to reliably discover provider records regardless of the number of Sybil nodes placed by an attacker around the target CID.
The mitigation replaces the regular \emph{put} and \emph{get} DHT operation by hash space region-based queries. 
Using these, \providers always find honest \resolvers to store their provider records and  querying nodes always discover these honest \resolvers and receive true provider records. While introducing an overhead sub-linear
in the number of Sybil nodes placed close to the target CID, this mitigation is only enforced when suspicions exist about the existence of an attack, as indicated by our detection mechanism.

Finally, we implement the attack using a custom IPFS DHT server node, and we implement our detection and mitigation techniques on top of the \texttt{libp2p} DHT~\cite{libp2p_github}.\footnote{See \Cref{sec:ae-appendix} for details including where to find the implementations.}
Importantly, the detection and mitigation implementations are fully compatible with the unmodified IPFS clients and can be incrementally deployed in the system. 
Therefore, while nodes that have not upgraded may remain vulnerable to the censorship attack, they continue to interoperate with nodes that have upgraded.
We evaluate the feasibility of the attack and the efficiency of the countermeasures using simulations as well as actual experiments on the live IPFS network.
Our proposed detection and mitigation schemes are scheduled to be deployed in the next release of the \texttt{libp2p} DHT.

\para{Related Attacks and Mitigations}
Similar DHT vulnerabilities have been previously discussed in the literature~\cite{dabek2001wide} and multiple prevention mechanisms have been proposed~\cite{dabek2001wide, danezis2009sybilinfer, dan2012centralized, danezis2005sybil}. However, mostly due to practical reasons~\cite{dabek2001wide, dan2012centralized, baumgart2007s} or unrealistic assumptions~\cite{danezis2009sybilinfer, danezis2005sybil, prunster2018holistic}, these mechanisms cannot be deployed in modern decentralized systems. We provide a detailed discussion on this topic in \Cref{sec:related}. As a result, multiple top-tier systems currently rely on a vulnerable DHT for various purposes.
The peer discovery mechanism for several blockchains (\eg, Ethereum~\cite{eth19discovery}, Celestia~\cite{celestia}, or Polkadot~\cite{burdges2020overview}) uses the same \texttt{libp2p} DHT implementation as IPFS. File sharing in I2P~\cite{timpanaro2015evaluation} and data dissemination in Dat~\cite{dat23dat} also use the Kademlia DHT, although with a different implementation.
Our contributions (both the attack, its detection, and mitigation mechanisms) are expected to apply to these systems as well and more broadly to systems using Kademlia or a Kademlia-like DHT.

\para{Outline}
In \Cref{sec:ethics}, we discuss ethical considerations for our study. \Cref{sec:background} presents background on IPFS and its content resolution mechanisms. \Cref{sec:attack}, \Cref{sec:detection}, and \Cref{sec:mitigation} respectively introduce the attack, its detection, and mitigation techniques. 
In \Cref{sec:evaluation} we evaluate all these mechanisms experimentally and we discuss related work in \Cref{sec:related}. \Cref{sec:discussion} provides a discussion on the implication of the attack and its countermeasures, while \Cref{sec:conclusion} concludes the paper.
This paper has an accompanying artifact which contains implementations of the attack, detection and mitigation, and experiments. \Cref{sec:ae-appendix} describes how to access the artifact and run the experiments to reproduce the results stated in this paper.

\vspacebeforesection
\section{Ethical Considerations and Responsible Disclosure}\label{sec:ethics}

Our work discovers a vulnerability in an existing system with thousands of users, and our experiments involve mounting an attack on the live IPFS network. This may raise ethical concerns. However, we have worked closely with Protocol Labs, the company that created and maintains IPFS and \texttt{libp2p}, to develop mitigations for the attack and ensure that our experiments do not cause harm to any existing users.

As soon as we discovered the potential security impact of the vulnerability in June 2021, we initiated a responsible disclosure process by contacting Protocol Labs. Subsequently, the vulnerability has been assigned a CVE record CVE-2023-262481\footnote{\url{https://cve.mitre.org/cgi-bin/cvename.cgi?name=CVE-2023-26248}}. Details of the CVE record will be made public once the mitigation is deployed and this work is published.
To ensure no harm or danger to regular operations or honest users, we limited our censorship attacks to randomly generated content published by our own nodes. Our attacker nodes were implemented to only drop messages related to our randomly generated content and behave normally otherwise to avoid any other side effects on the network. 

Throughout the process, Protocol Labs actively supported our research by providing access to their datasets, including network crawls. They welcomed further research based on our findings and provided advice on performance requirements for detection and mitigation techniques. 
Public disclosure of the vulnerability was carried out at IPFS Camp 2022~\cite{ipfs_camp2022}
held by Protocol Labs,
which included discussions with IPFS developers, maintainers, and other researchers.
We are currently working with Protocol Labs on the deployment of our detection and mitigation methods in the next version of the \texttt{libp2p} DHT.
This deployment will benefit other systems using \texttt{libp2p}, including Ethereum ~\cite{eth19discovery}, Celestia~\cite{celestia}, or Polkadot~\cite{burdges2020overview}.
Our mechanisms are not specific to the \texttt{libp2p} implementation of the Kademlia protocol and can be ported to other implementations of the DHT by their maintainers.
We commit to assisting these maintainers in this task.

\vspacebeforesection
\section{Background}
\label{sec:background}

In this section, we provide the necessary background information to ensure a comprehensive understanding of the attack described in this paper. We start with a description of the Distributed Hash Table (DHT) used by IPFS, followed by its content resolution mechanisms. We also detail techniques for network size estimation, necessary for our attack detection and mitigation mechanisms.

\vspacebeforesection
\subsection{IPFS DHT}
\label{sec:kad_dht}

We review the features of the Kademlia DHT~\cite{maymounkov2002kademlia} and its \texttt{libp2p} implementation~\cite{libp2p_github} that are the most relevant to our attack.
To participate in the DHT, each peer generates a public/private key pair and derives an identity $\peerid \in \{0,1\}^{256}$ as the hash of its public key.
Ideally, each peer generates a random key pair and, therefore, peer IDs are distributed uniformly and independently over the space $\{0,1\}^{256}$.
While honest nodes follow this rule, malicious nodes may generate and choose from an arbitrary number of key pairs.
Each peer maintains a routing table consisting of $m=256$ buckets.
The $i$-th bucket contains the addresses of up to $k=20$ peers whose peer IDs share a common prefix of exactly $i$ bits with the peer's own peer ID. 

A new participant node joins the IPFS network by contacting one of the hardcoded bootstrap nodes. This bootstrap node provides the new node with some initial peers allowing it to join the DHT. The new node uses this information to perform a walk through the DHT towards its own peer ID.
The walk allows to: \textit{(i)}~make sure that there is no other node in the network with the same ID; \textit{(ii)}~discover new peers and fill the newcomer's DHT routing table. At the same time, the newcomer establishes \bitswap~\cite{de2021accelerating} connections to a subset of encountered peers (usually around 300 of them). The core role of the \bitswap protocol is to enable bilateral content transfer and to play the role of a cache for recently-accessed content.

The main DHT operation $\Call{GetClosestPeers}{\key}$ returns the $k=20$ closest peers to $\key$. 
In Kademlia, the distance between two keys $x$ and $y$ in the key space is given by $x \oplus y \in \{0,...,2^{256}-1\}$, where $\oplus$ denotes the bitwise XOR operation on the keys; the resulting binary string is interpreted as an integer.
When a client wants to find the peers with IDs closest to $\key$, it sends a request to the $\alpha=3$ peers in its routing table whose peer IDs are closest to $\key$. Each of these peers returns the $k$ closest peers to $\key$ in its own routing table and the addresses of these peers. 
The client again sends a request to the $\alpha$ peers closest to $\key$, among peers in its routing table and those whose addresses it just received. This process repeats until the client does not find any more peers closer to $\key$.
Due to network churn and imperfect routing tables, we observed in our experiments that successive calls to $\Call{GetClosestPeers}{\key}$ do not always return the true set of $k=20$ closest peers (we provide more details in \Cref{sec:evaluation}, \Cref{fig:20closest}).

\vspacebeforesection
\subsection{Content Resolution in IPFS}
\label{sec:ipfs}

IPFS is a content-centric network.
It allows its participant to request files without specifying their location. 
Content is indexed by content IDs $\cid \in \{0,1\}^{256}$ that are derived from a hash of that content.
Both peer IDs and CIDs are used as keys in the DHT.
Each node can play the role of a \provider, \downloader, or \resolver. 
The process of content advertisement and resolution is illustrated in \Cref{fig:add_get_provider}.

When a \provider wishes to publish content with a given $\cid$ on IPFS, it creates a \emph{provider record} that contains $cid$ and the \provider's address.
During a $\Call{Provide}{\cid}$ operation, the \provider first uses $\Call{GetClosestPeers}{\cid}$ to locate the $k=20$ peers with their peer IDs closest to $\cid$, 
and then sends them a $\mathsf{PutProvider}$ message including the provider record (\Cref{fig:add_get_provider}(a)).
We call the peers that hold provider records for $\cid$ the \emph{resolvers} for $\cid$.

Each CID can have several \providers. In fact, by default, each IPFS client becomes a provider for each piece of content it downloads for a fixed amount of time (12h, 24h, or 48h depending on the client version or custom configuration). As a result, the system provides an auto-scaling feature with supply automatically rising with demand.

When a \downloader wishes to fetch a piece of content, it first sends a request to all its \bitswap peers. If none of them has the content, the \downloader uses the DHT-based resolution system. We stress that the \bitswap protocol plays the supporting role of a cache in the dissemination of popular files. However, the mechanism does not provide reliable content resolution, in particular for new or less popular content. %

When \bitswap unstructured search fails, the \downloader resolves $\cid$ using $\Call{FindProviders}{\cid}$. This operation uses a DHT walk identical to that of $\Call{GetClosestPeers}{\cid}$ to find $k$ \resolvers but also queries encountered nodes for a provider record for $\cid$ (\Cref{fig:add_get_provider}(b)). The process terminates when either 20 \providers have been found, or all \resolvers have been asked. Querying all encountered nodes (\ie, not only the designated \resolvers) is useful because some of the encountered nodes may have a provider record in their cache.

Upon receiving a provider record, the client connects to the address specified in the provider record to retrieve the actual content (\Cref{fig:add_get_provider}(c)).
Provider records are not authenticated, and therefore malicious \providers may respond with incorrect provider records (or may not respond at all). However, the integrity of the content is preserved because the hash of the retrieved content can be verified against its $\cid$.

\begin{figure}[tb]%
    \centering%
    \begin{tikzpicture}[]%
        \footnotesize%
        \draw[->] (0,0) -- (8,0) node[anchor=south] {keyspace};%
        \draw (0.5,0) node[circle, draw, fill=white, minimum size=3pt, inner sep=1pt] (p1) {1};%
        \draw (1.5,0) node[circle, draw, fill=white, minimum size=3pt, inner sep=1pt] (p2) {2};%
        \draw (2.5,0) node[circle, draw, fill=white, minimum size=3pt, inner sep=1pt] (p3) {3};%
        \draw (3.0,0) node[circle, draw, fill=white, minimum size=3pt, inner sep=1pt] (p4) {4};%
        \node at (4,0) (cid) {\includegraphics[width=0.5cm]{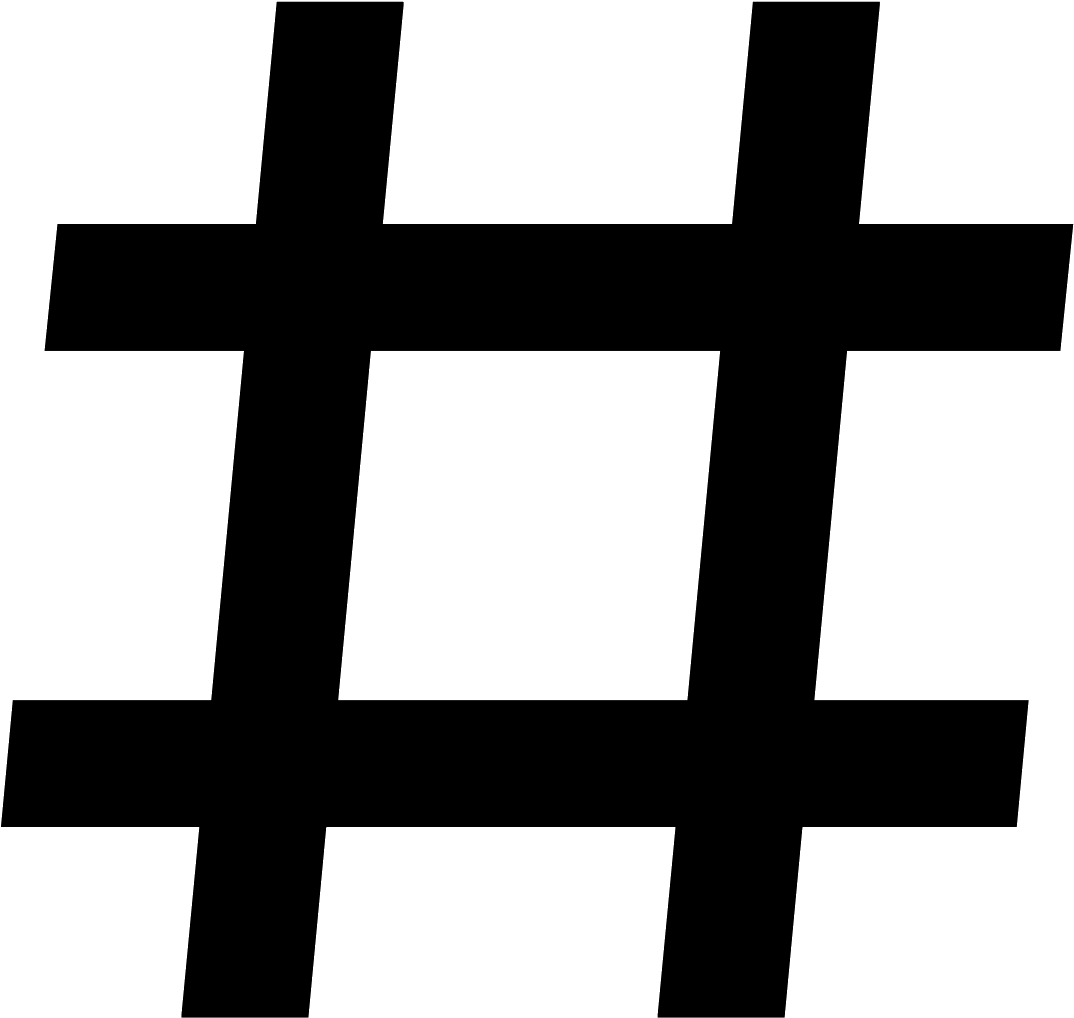}};%
        \draw (5,0) node[circle, draw, fill=white, minimum size=3pt, inner sep=1pt] (p5) {5};%
        \draw (5.5,0) node[circle, draw, fill=white, minimum size=3pt, inner sep=1pt] (p6) {6};%
        \draw (6,0) node[circle, draw, fill=white, minimum size=3pt, inner sep=1pt] (p7) {7};%
        \draw (7,0) node[circle, draw, fill=white, minimum size=3pt, inner sep=1pt] (p8) {8};%

        \node [below, yshift= -6pt] at (p1) {Provider};%
        \node [below, yshift= -6pt] at (cid) {CID};%
        \draw [decorate, decoration={brace, amplitude=5pt, raise=20pt}] (p6.east) -- (p3.west);%
        \node [below, yshift= -26pt] at ($(p6.east)!0.5!(p3.west)$) {Resolvers ($k$ closest peers)};%
        \node [above left] at (p1) {\notoemoji[height=0.5cm]{page_facing_up}};%

        \draw [->] (p1) to [out=60,in=120] (p3);%
        \draw [->] (p1) to [out=60,in=120] (p4);%
        \draw [->] (p1) to [out=60,in=120] (p5);%
        \draw [->] (p1) to [out=60,in=120] (p6);%
        \node (pr) [ellipse callout, draw, callout relative pointer={(0.25cm,-0.25cm)}] at (1,1.75) {1 has \textnotoemoji{page_facing_up}};%
        \node [anchor=west] at (pr.east) {Provider Record};%

        \node [anchor=west, inner sep=0] at (0,2.25) {\textbf{(a) Add Provider}};%

    \end{tikzpicture}%
    \hfill%
    \begin{tikzpicture}[]%
        \footnotesize%
        \draw[->] (0,0) -- (8,0) node[anchor=south] {keyspace};%
        \draw (0.5,0) node[circle, draw, fill=white, minimum size=3pt, inner sep=1pt] (p1) {1};%
        \draw (1.5,0) node[circle, draw, fill=white, minimum size=3pt, inner sep=1pt] (p2) {2};%
        \draw (2.5,0) node[circle, draw, fill=white, minimum size=3pt, inner sep=1pt] (p3) {3};%
        \draw (3.0,0) node[circle, draw, fill=white, minimum size=3pt, inner sep=1pt] (p4) {4};%
        \node at (4,0) (cid) {\includegraphics[width=0.5cm]{img/hashtag.png}};%
        \draw (5,0) node[circle, draw, fill=white, minimum size=3pt, inner sep=1pt] (p5) {5};%
        \draw (5.5,0) node[circle, draw, fill=white, minimum size=3pt, inner sep=1pt] (p6) {6};%
        \draw (6,0) node[circle, draw, fill=white, minimum size=3pt, inner sep=1pt] (p7) {7};%
        \draw (7,0) node[circle, draw, fill=white, minimum size=3pt, inner sep=1pt] (p8) {8};%

        \node [below, yshift= -6pt] at (p1) {Provider};%
        \node [below, yshift= -6pt] at (cid) {CID};%
        \draw [decorate, decoration={brace, amplitude=5pt, raise=20pt}] (p6.east) -- (p3.west);%
        \node [below, yshift= -26pt] at ($(p6.east)!0.5!(p3.west)$) {Resolvers ($k$ closest peers)};%
        \node [above left] at (p1) {\notoemoji[height=0.5cm]{page_facing_up}};%
        \node [below, yshift= -6pt] at (p8) {Downloader};%

        \draw [->] (p8) to [out=120,in=60] (p3);%
        \draw [->] (p8) to [out=120,in=60] (p4);%
        \draw [->] (p8) to [out=120,in=60] (p5);%
        \draw [->] (p8) to [out=120,in=60] (p6);%
        \node (pr) [ellipse callout, draw, callout relative pointer={(0.25cm,-0.25cm)}] at (3,1.75) {1 has \textnotoemoji{page_facing_up}};%
        \node [anchor=east] at (pr.west) {Provider Record};%
        \node (pr) [ellipse callout, draw, callout relative pointer={(-0.25cm,-0.25cm)}] at (6,1.75) {Who has \textnotoemoji{page_facing_up}?};%

        \node [anchor=west, inner sep=0] at (0,2.25) {\textbf{(b) Get Providers}};%
        
    \end{tikzpicture}%
    \hfill%
    \begin{tikzpicture}[]%
        \footnotesize%
        \draw[->] (0,0) -- (8,0) node[anchor=south] {keyspace};%
        \draw (0.5,0) node[circle, draw, fill=white, minimum size=3pt, inner sep=1pt] (p1) {1};%
        \draw (1.5,0) node[circle, draw, fill=white, minimum size=3pt, inner sep=1pt] (p2) {2};%
        \draw (2.5,0) node[circle, draw, fill=white, minimum size=3pt, inner sep=1pt] (p3) {3};%
        \draw (3.0,0) node[circle, draw, fill=white, minimum size=3pt, inner sep=1pt] (p4) {4};%
        \node at (4,0) (cid) {\includegraphics[width=0.5cm]{img/hashtag.png}};%
        \draw (5,0) node[circle, draw, fill=white, minimum size=3pt, inner sep=1pt] (p5) {5};%
        \draw (5.5,0) node[circle, draw, fill=white, minimum size=3pt, inner sep=1pt] (p6) {6};%
        \draw (6,0) node[circle, draw, fill=white, minimum size=3pt, inner sep=1pt] (p7) {7};%
        \draw (7,0) node[circle, draw, fill=white, minimum size=3pt, inner sep=1pt] (p8) {8};%

        \node [below, yshift= -6pt] at (p1) {Provider};%
        \node [above left] at (p1) {\notoemoji[height=0.5cm]{page_facing_up}};%
        \node [below, yshift= -6pt] at (p8) {Downloader};%

        \draw [->] (p1) to [out=30,in=150] (p8);%
        \node at (4,1.35) {\notoemoji[height=0.5cm]{page_facing_up}};%
        \node (pr) [ellipse callout, draw, callout relative pointer={(-0.25cm,-0.25cm)}, anchor=pointer] at (p8.north) {1 has \textnotoemoji{page_facing_up}};%

        \node [anchor=west, inner sep=0] at (0,2) {\textbf{(c) Content Transfer}};%
        
    \end{tikzpicture}%

    \caption{Content resolution using the DHT. 1) The provider upload its provider record to designated resolvers. 2) The downloader fetches the provider record from resolvers. 3) The downloader uses the information in the provider record to download the content directly from the provider. \vspaceaftercaption}
    \label{fig:add_get_provider}
\end{figure}

\vspacebeforesection
\subsection{Network Size Estimator}
\label{sec:netsize}

The number of nodes in a decentralized system is generally unknown due to the avoidance of centralized membership management.
This number is nonetheless useful for optimizations, deciding on individual node configurations, or security mechanisms.
Various methods were proposed for the decentralized estimation of unstructured and structured networks~\cite{eli-sohl-dht-size-estimation,kostoulas2005decentralized, manku2003symphony}.
We use in this work a mechanism developed initially by Protocol Labs as part of a mechanism for decreasing the latency of publishing content in IPFS~\cite{network-size-estimation-notion,network-size-estimation-github-pr}.

Each node in the DHT refreshes its routing table periodically (every $10$ minutes in \texttt{libp2p}). 
For this, the node samples $m$ random keys (one for each bucket of its routing table)
and queries the DHT to obtain the $k=20$ closest peer IDs to each key.
Using these, the node then computes the average distance between each one of these keys $\key_j$ for $j=1,\dots,m$ and their $i$-th closest peer ID for $i=1,...,k$ (with $m=256$ and $k=20$).
\begin{equation}
    \label{equ:avg-dist}
    \overline{D}_i = \frac{1}{m} \sum_{j=1}^m \operatorname{dist}(\key_j, \peerid_{j}^{(i)})
\end{equation}
where $\peerid_{j}^{(i)}$ is the $i$-th closest peer ID to $\key_j$.
With $N$ peers in the DHT and peer IDs uniformly distributed in the hash space, the expected distance between a $\key$ and its $i$-th closest peer ID is $\frac{2^{256}i}{N+1}$. The node then runs a least square regression to compute the value of $N$ for which the expected distances best fit the empirical average distances, \ie,
\begin{equation}
    \label{equ:netsize-least-squares}
    \hat{N} = \arg\min_{N} \sum_{i=1}^k \left(\overline{D}_i - \frac{2^{256}i}{N+1}\right)^2.
\end{equation}
The resulting estimate $\hat{N}$ can be computed in closed form.

When a node starts running, it must perform DHT queries for a few random keys to initialize its network size estimate. 
Since a larger number of queries will result in higher accuracy, making more queries than what is needed to initialize one's routing table is recommended.
Thereafter, keeping the estimate up-to-date does not require any excess DHT queries beyond what is already used for refreshing the routing table as this is done frequently (every 10 minutes).

While the network size estimate has a stochastic variance resulting from the probability distribution of the honest peer IDs, it is hard for an attacker to bias the estimate significantly. Since the estimator uses the density of peer IDs around keys chosen uniformly at random, the adversary would require numerous Sybil nodes (on the order of the whole network size) to significantly affect the peer ID density around those keys.

\vspacebeforesection
\section{Threat Model}\label{sec:model}

We assume $N$ DHT nodes participating in the IPFS network. Multiple nodes may share the same IP address (due to NAT or being hosted by the same physical machine)~\cite{marcus2018low}. However, two nodes cannot share the same ID.

We assume the presence of malicious actors in the network that may refuse to store valid provider records and distribute these to honest participants.
Malicious actors can spawn multiple virtual nodes within one physical machine, operate multiple physical machines, and coordinate their actions.
We assume that no honest node is fully eclipsed by malicious ones, \ie, each honest node has \emph{at least} one honest peer and the DHT routing allows honest nodes to reach any key and discover other honest peers.
IPFS already implements multiple mechanisms preventing eclipse attacks at the DHT level~\cite{total_eclipse}.

An attacker running Sybil nodes interfaces with the network using regular IPFS operations. We do not rely on bugs present in the operating system or any other components not related to the P2P network node implementation under attack. Any flaw in the P2P system protocols, however, may be exploited as these are considered part of the attack target. Non-Sybil DHT nodes (ones that are not spawned by the malicious actor) are assumed to be configured and operated as intended. Thus, importantly, the attacker only controls Sybil nodes that they create but does not corrupt or bring down any other nodes.

The goal of the attacker is to prevent \downloaders from obtaining \provider records for a target CID. This leads to \emph{content censorship} as the \downloader cannot find a \provider to obtain the content from (recall that \bitswap is only a cache for popular content while the DHT is required for reliable content discovery). We measure the attack effectiveness $\aeff$ as the ratio of unsuccessful $\Call{FindProviders}{\cid}$ queries to the total number of queries for existing content $\cid$ issued by honest \downloaders. A query is unsuccessful if it does not return any honest provider record. The attack effectiveness may vary depending on the placement of the CID and the \downloader issuing the request in the hash space. We thus always consider $\aeff$ as an average for multiple CIDs and multiple \downloaders, both uniformly spread across the hash space.

The goal of honest participants is to detect the attack and mitigate its effects, \ie to enable \downloaders to discover valid provider records and later fetch the content despite the actions of the attacker.

To assess the effectiveness of our \emph{detection} mechanism, we use the false positive $f_p$ and false negative $f_n$ rates. The false positive rate $f_p$ is the proportion of erroneous detections (\ie when there was no attack). The false negative rate $f_n$ is the proportion of attacks that are not detected. A detection leads to a mitigation action. A false positive leads, therefore, solely to additional overhead, while a false negative leads to effective censorship of content. Henceforth, we favor minimizing the false negative rate $f_n$.

The mitigation effectiveness $\meff$ is the ratio of the number of successful $\Call{FindProviders}{\cid}$ queries to the total number of queries issued by honest \downloaders when the target, existing $\cid$ is under attack and the mitigation mechanism is used.
A successful query is defined as one that returns at least one honest provider record. Similarly to the attack effectiveness, we report $\meff$ as an average for queries issued for multiple CIDs by multiple downloaders uniformly spread across the hash space. 

\begin{table}[t]
    \footnotesize
    \newcolumntype{E}{>{\raggedright\arraybackslash} m{0.125\linewidth} }
    \newcolumntype{F}{>{\raggedright\arraybackslash} m{0.775\linewidth} }
    \renewcommand{\arraystretch}{1.2}

    \begin{tabular}{EF}
    \toprule
    \multicolumn{2}{l}{\textbf{General parameters}} \\
    \midrule
    $N$ & Network size (number of nodes) \\
    $k$ & Bucket size, \resolvers per CID, and number of closest peers obtained in a $\Call{GetClosestPeers}{\cid}$ call (currently, $k=20$ in \texttt{libp2p}/IPFS) 
     \\
    \midrule
    
    \multicolumn{2}{l}{\textbf{Attack general parameters}} \\
    \midrule
    $\aeff$ &  Effectiveness of the attack $[\%]$ \\
    $\numSybils$ & Number of Sybil nodes \\
    $\numSybils(\aeff)$ & Number of Sybil nodes necessary to perform the attack with effectiveness $\aeff$. \\
    \midrule
    
    \multicolumn{2}{l}{\textbf{Attack costs}} \\
    \midrule
    $s(\numSybils)$ & Brute-force attempts necessary to generate $\numSybils$ Sybil identities that are the closest to a target CID.\\
    $\cgen$ & Cost of generating $s(\numSybils)$ Sybil identities $[\$]$\\
    $\coper$ & Cost per unit time of operating $\numSybils$ Sybil nodes to attack a single CID $[{\$}/{s}]$\\
    $\catt$ & Total cost of attacking a single CID $[\$]$\\
    \midrule
    
    \multicolumn{2}{l}{\textbf{Attack performance}} \\
    \midrule
    $\twarmup$ & Warmup time during which the $\numSybils$ Sybil nodes need to be run but the attack is not yet fully effective $[s]$\\
    $\teff$ &Time after $\twarmup$ during which the attack remains fully effective. The attack maintains its effect only as long as the $\numSybils$ Sybil nodes are present in the network $[s]$. \\
    \midrule
    
    \multicolumn{2}{l}{\textbf{Detection and mitigation performance}} \\
    \midrule
    $\threshold$ & Threshold for the detection mechanism \\
    $f_n$ & Detection false negative rate $[\%]$ \\
    $f_p$ & Detection false positive rate $[\%]$ \\
    $\meff$ & Effectiveness of the mitigation $[\%]$ \\
    \bottomrule
    
    \end{tabular}
    \caption{Parameters and characterization of the attack, detection, and mitigation.\vspaceaftercaption}
    \label{tab:notation}
\end{table}

\vspacebeforesection
\section{CID Censorship Attack}\label{sec:attack}

We now proceed to detail our content censorship attack, targeting a specific victim CID.
To describe and analyze this attack, we use a Sybil attack model with notations adapted from prior work~\cite{prunster2018holistic, total_eclipse} and summarized in \Cref{tab:notation}.

\vspacebeforesection
\subsection{Attack process}
\label{sec:attack-process}

\begin{figure}[tb]%
    \centering%
    \begin{tikzpicture}[]%
        \footnotesize%
        \draw[->] (0,0) -- (8,0) node[anchor=south] {keyspace};%
        \draw (0.5,0) node[circle, draw, fill=white, minimum size=3pt, inner sep=1pt] (p1) {1};%
        \draw (1.5,0) node[circle, draw, fill=white, minimum size=3pt, inner sep=1pt] (p2) {2};%
        \draw (2.5,0) node[circle, draw, fill=white, minimum size=3pt, inner sep=1pt] (p3) {3};%
        \draw (3.0,0) node[circle, draw, fill=white, minimum size=3pt, inner sep=1pt] (p4) {4};%
        \draw (3.3,0) node[circle, draw, fill=red!50!white, minimum size=3pt, inner sep=1pt] (sybil1) {S};%
        \draw (3.6,0) node[circle, draw, fill=red!50!white, minimum size=3pt, inner sep=1pt] (sybil2) {S};%
        \node at (4,0) (cid) {\includegraphics[width=0.5cm]{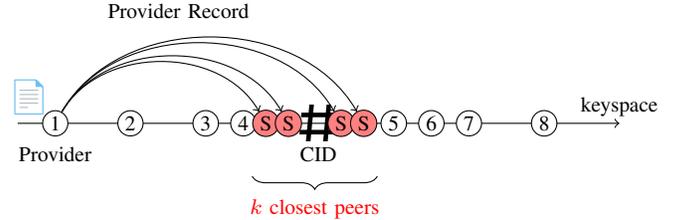}};%
        \draw (4.3,0) node[circle, draw, fill=red!50!white, minimum size=3pt, inner sep=1pt] (sybil3) {S};%
        \draw (4.6,0) node[circle, draw, fill=red!50!white, minimum size=3pt, inner sep=1pt] (sybil4) {S};%
        \draw (5,0) node[circle, draw, fill=white, minimum size=3pt, inner sep=1pt] (p5) {5};%
        \draw (5.5,0) node[circle, draw, fill=white, minimum size=3pt, inner sep=1pt] (p6) {6};%
        \draw (6,0) node[circle, draw, fill=white, minimum size=3pt, inner sep=1pt] (p7) {7};%
        \draw (7,0) node[circle, draw, fill=white, minimum size=3pt, inner sep=1pt] (p8) {8};%

        \node [below, yshift= -6pt] at (p1) {Provider};%
        \node [below, yshift= -6pt] at (cid) {CID};%
        \draw [decorate, decoration={brace, amplitude=5pt, raise=20pt}] (sybil4.east) -- (sybil1.west);%
        \node [below, yshift= -26pt] at ($(sybil4.east)!0.5!(sybil1.west)$) {\textcolor{red}{$k$ closest peers}};%
        \node [above left] at (p1) {\notoemoji[height=0.5cm]{page_facing_up}};%

        \draw [->] (p1) to [out=60,in=120] (sybil1);%
        \draw [->] (p1) to [out=60,in=120] (sybil2);%
        \draw [->] (p1) to [out=60,in=120] (sybil3);%
        \draw [->] (p1) to [out=60,in=120] (sybil4);%
        \node (pr) at (1,1.5) {};%
        \node [anchor=west] at (pr.east) {Provider Record};%

    \end{tikzpicture}%

    \caption{Illustration of a censorship attack: Attacker places $k$ Sybil peers ($k=4$ in this example) closer to the CID than any honest peers, so provider records are now only sent to Sybil peers. Similarly, requests to obtain provider records are sent to the Sybil peers, who can ignore them.\vspaceaftercaption}
    \label{fig:attack-illustration}
\end{figure}

\para{Overview}
In IPFS, \provider records for a target CID should be stored on the $k=20$ peers closest to that CID (\ie \resolvers).
We generate $\numSybils \geq k$ Sybil identities that are closer to the target CID than to the closest honest peer. 
As a result, those Sybil identities receive all new \provider records from \providers and resolution queries from \downloaders for that CID.
\Cref{fig:attack-illustration} illustrates the position of Sybil peers.
Sybils drop \provider records and do not respond to queries for that target CID.

\para{Details}
The attack proceeds as follows.
First, we use the IPFS API to retrieve the current $k=20$ closest peer IDs to the target CID, sort them by distance to the CID, and identify the closest one.
We then repeatedly generate random public/private key pairs and compute new peer IDs by hashing the public key. 
If a generated peer ID is closer to the CID than to the currently closest peer ID, we keep the corresponding key pair, otherwise, we discard it. 
We repeat the process until we obtain $\numSybils$ peer IDs that are closer to the target CID than any of the original 20 closest peer IDs.

For each generated Sybil peer ID, we spawn a custom \texttt{libp2p} DHT node (server) that joins the IPFS network.
These Sybil nodes drop received provider records for the target CID and respond with empty messages to received resolution queries for the target CID.
The Sybil nodes behave normally for other CIDs, so that our experiments do not affect the rest of the network.
The attacker continuously monitors the set of $\numSybils$ closest peers to the target CID to make sure it contains only the Sybil nodes. 
If a new, honest node appears in the set, the attacker reacts by generating additional identifiers to maintain the desired number of Sybil nodes in the set.

\vspacebeforesection
\subsection{Attack analysis}
\label{sec:attack-analysis}

We analyze the attack in terms of cost and in terms of effectiveness, including its timing.

\para{Initial costs}
The first cost for an attacker is that of generating Sybil identities.
As the hash function is pre-image resistant, this process must use a brute force generation of private/public key pairs and associated IDs.
The number of attempts it takes for generating $\numSybils$ Sybil identities that are closer to the target is denoted $s(\numSybils)$.
This number naturally depends on $\numSybils$, but also on the distance of the closest honest peer from the target CID, which in turn depends on the number of peers in the network.
The closer the honest peer is to the target CID, the more keys the attacker needs to generate to obtain Sybil peer IDs closer to the CID.

The number of attempts further translates into an operational cost which we quantify using public cloud resource costs.
This cost $\cgen$ depends on $s(\numSybils)$ and the cost of generating one private/public key pair.
IPFS and \texttt{libp2p} support both the RSA and Edwards-curve Digital Signature Algorithm (EdDSA) cryptosystems.
The generation of keys for EdDSA is significantly faster than for RSA, and both are embarrassingly parallel; we evaluate these costs in \Cref{sec:evaluation}, and choose to target EdDSA in our implementation of the attack 

The effectiveness of the attack $\aeff$ varies with the number of Sybils $\numSybils$.
Theoretically, the attack only requires $\numSybils = k = 20$ Sybil identities.
However, different DHT nodes do not always discover the same set of $k=20$ closest peers (we provide experimental evidence in \Cref{sec:evaluation}, \Cref{fig:20closest}).
As a result, the attack generally requires more than $20$ Sybil identities (as some further honest peers may be discovered).
On the other hand, some of the discovered peers may not be online, so the attack may also succeed with fewer than $20$ Sybil identities.
We empirically studied the effect of the number of Sybil identities on the rate of successfully censoring the target content, and observe that $\numSybils = 45$ Sybil identities can censor content with a $\aeff=99\%$ probability of success (\Cref{sec:evaluation}, \Cref{fig:attack_success_rate}).
While $\numSybils(\aeff)$ depends on $k$, $\numSybils(\aeff)$ does not depend on 
whether the content was already provided before the Sybils were launched or not.

\para{Timing}
The \emph{warmup time} $\twarmup$, \ie, the time before the content is effectively censored and becomes undiscoverable, depends on whether the Sybil peers are launched \emph{before} the \provider sends its \provider records to the network or the Sybil peers are launched \emph{after}.
If Sybil peers are launched before the first \provider sends its \provider record to the network then, since the Sybil peers are the closest peers to the target CID, \providers will most likely send their \provider records to only Sybil peers.
These Sybil peers simply drop the provider records.
As a result, the content never becomes discoverable in the network.
Therefore, if all Sybil nodes are launched before the first \provider advertised the content, the attack is effective immediately, i.e., $\twarmup = 0$.
We note, however, that this best-case scenario is not likely in all contexts of use of IPFS, as it requires knowing the CID of the content to censor before mounting the attack.
This CID depends, indeed, on the \emph{content} of the file which may be known only upon its publication.
In certain cases, however, the attacker may know in advance that a specific file will be published and act to prevent its discoverability.

If the provider records were already stored on honest peers before the Sybil peers were launched, a \downloader may encounter an honest peer with relevant provider records \emph{before} reaching the Sybil peers, and be able to obtain the provider records this way.
By default, however, such provider records expire every 48 hours\footnote{24 hours in older versions of \texttt{go-libp2p}}, after which a \provider must call $\Call{Provide}{\cid}$ again.
As a result, in the worst case, the last provider record on an honest \resolver will be removed $\twarmup = 48$h after launching the Sybil nodes, after which the content becomes censored.
It is not desirable to get rid of this limited lifetime of provider records: an unlimited lifetime would result in a gradual overload of long-running peers and open new avenues for DoS attacks.

\para{Overall costs}
The overall costs of the attack $\catt$ include the initial costs $\cgen$ plus the operational costs of running the Sybil nodes at $\coper$ per unit time.
The operational cost is incurred during the warmup time $\twarmup$ before the attack is effective and the time during which the attack must \emph{remain} effective $\teff$.
Therefore, $\catt = \cgen  + (\twarmup + \teff) \times \coper$.

\vspacebeforesection
\section{Censorship Attack Detection}
\label{sec:detection}

The first step in countering an attack is its reliable detection. It enables activating mitigation techniques only when needed and avoids unnecessary overhead when the network is not under attack. Importantly, the detection cannot simply rely on the unavailability of the \provider records as 
this could be due to other reasons (e.g., the \provider records expired and the \provider did not renew them).
Moreover, the detection method should ideally expose
an attack as soon as the Sybil peers are added to the DHT, even though unavailability of the \provider records may only occur several hours ($\twarmup = 48$ hours) later
(as described in \Cref{sec:attack-analysis}). This would help prevent downtime for the content.
To execute the content censorship attack, the attacker must hijack all $\mathsf{PutProvider}$ advertisements for the target CID.
To this end, the attacker must operate \numSybils~Sybil peers whose IDs are closer to the target CID than to any other honest peer.
As a result, when an honest node queries the DHT for the target CID, the $20$ closest peer IDs it finds are closer to the CID than usual.
The node can thus use the observed peer IDs to detect whether the CID is censored or not.

\para{Method Overview}
We repurpose a statistical method originally developed by Cholez \emph{et al.}~\cite{cholez2010detection} for detecting eclipse attacks.
This method first obtains the $k=20$ closest peer IDs to the CID, using a DHT query, and computes the common prefix length of each peer ID with that CID (\ie the number of leading bits that match both the CID and the peer ID).
It then compares the empirical distribution of the $20$ common prefix lengths with the `model' probability distribution that would result if all these peers were honest, \ie, if their peer IDs were chosen randomly and uniformly.
We compare the empirical distribution to the model distribution by computing the Kullback-Liebler (KL) divergence between the two distributions (also known as a G-test~\cite{g-test}).
A large KL divergence indicates a mismatch between the empirical and model distributions, indicating an attack. On the other hand, a small KL divergence indicates a good match of the distributions, indicating no attack.
This KL divergence-based has been shown to perform well under a small number of samples~\cite{bio-statistics}, which is our situation with only $20$ samples corresponding to the $20$ closest peers.

A challenge of implementing this detection method for the IPFS network is that it requires knowing the number of peers in the DHT to compute the model distribution.
We demonstrate how to adapt this detection mechanism to a DHT of dynamic size by using a network size estimator.
Based on the network size estimate, our detection method computes an estimate of the model distribution, which is then used for detection.
Our detection method does not require any additional communication and requires little local computation.

\para{Method Details}
Honest peers choose their peer IDs uniformly at random, and thus under no attack, peer IDs in the DHT are distributed uniformly across the hash space.
We thus expect the common prefix lengths of peer IDs with the target CID to follow a geometric distribution. That is, given a target CID, the probability that the common prefix length of a randomly chosen peer ID with the target CID equals $x$ is $0.5^{x+1}$ (effects of the finite length ($256$ bits) of IDs can be neglected as the probability of a common prefix length of $256$ bits is negligible).
However, when a node queries the DHT for the target CID, it only obtains the $k=20$ closest peers to the target CID, and the distribution of their common prefix lengths is not geometric. Particularly, very small common prefixes are not observed as such peer IDs are far from the target ID (see~\Cref{fig:common-prefix-length-distribution}).
In the work of Cholez \emph{et al.}~\cite{cholez2010detection}, this effect is modeled by using a geometric distribution conditioned on the common prefix length being greater than $L$, where $L$ is a parameter that is estimated using measurements from the network.
If we adopted this approach, we would have to compute $L$ from a fixed, pre-determined network size estimate. We found that the model distribution computed using this method is highly susceptible to small variations in the network size estimate.
Instead, in this work, we compute the model distribution dynamically, \ie, as it varies according to an estimation of the network size.

\begin{figure}[t]
    \centering
    \includegraphics[width=\columnwidth]{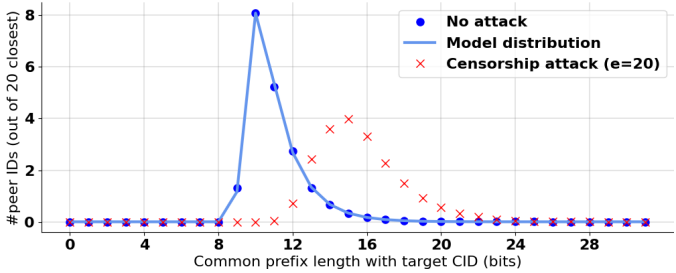}
    \caption{Probability distribution of common prefix lengths of the target CID with its $k=20$ closest peer IDs.\vspaceaftercaption}
    \label{fig:common-prefix-length-distribution}
\end{figure}

We will first describe the method assuming that the network size $N$ is known. 
Then, we will show how to adapt this method using an estimation of the network size.
We recall that the common prefix length of a randomly chosen peer ID with the target CID is distributed geometrically.
Now, in response to a lookup for the target CID, a node obtains the $k$ largest common prefix lengths (corresponding to the $k$ smallest distances). The cumulative distribution function of the $j$-th largest prefix length can be calculated as
\begin{align}
    \label{equ:cdf_j_best}
	F_j(x) &\triangleq \mathrm{Pr}(X^{(j)} \leq x) \nonumber \\
    &= \begin{cases} \sum_{i=0}^{j-1} {N \choose i} \left( 1-0.5^{x+1} \right)^{N-i} 0.5^{(x+1)i} & x \geq 0, \\ 0 & x < 0. \end{cases}
\end{align}
Since we only receive one sample for the $j$-th best peer ID for each $j$, we can compute an average probability mass function of the $k$ best common prefix lengths as
\begin{equation}
    \label{equ:pmf_avg}
    p(x) = \frac{1}{k} \sum_{j=1}^{k} (F_j(x) - F_j(x-1)).
\end{equation}
This model distribution $p(x)$ is shown along with the empirical average distribution of common prefix lengths in \Cref{fig:common-prefix-length-distribution}. Note that the empirical distribution in case of no attack (all peer IDs generated randomly) exactly matches the model distribution, while the distribution under an attack ($\numSybils = 20$ Sybil peers closer than honest peer IDs) significantly differs from the model.
Importantly, placing more than $20$ Sybils drives the ID distribution further away from the model, easing the detection.
Since $k=20$, the sums in \cref{equ:cdf_j_best,equ:pmf_avg} can be computed efficiently.

The KL divergence is a tool used to quantify the difference between two probability distributions. For two discrete probability distributions $p(x)$ and $q(x)$ on a support set $\mathcal{X}$, the KL divergence from $p$ to $q$ is defined as
\begin{equation}
    \label{equ:kl-divergence-definition}
    D\left(q \mathbin\Vert p \right) \triangleq \sum_{x \in \mathcal{X}} q(x) \ln\left( \frac{q(x)}{p(x)}\right).
\end{equation}
In our case, $X$ is the random variable denoting the common prefix length between a peer ID and the target CID,
$\mathcal{X} = \{0,...,256\}$, 
$p(x)$ is the model distribution and  
\begin{equation}
	\label{equ:pmf_emp}
    q(x) \triangleq \frac{1}{k} \sum_{i=1}^{k} \mathbbm{1}\{X_i = x\}
\end{equation} is the empirical distribution of common prefix lengths. 
Note that the support of the empirical distribution $q(x)$ is a subset of the support of the model distribution $p(x)$, therefore the sum in the KL divergence is only computed over values of $x$ for which $q(x) > 0$.
Now, a threshold $\threshold$ must be chosen so that the CID is flagged to be under an attack if and only if $D\left(q \mathbin\Vert p \right) > \threshold$.

\para{Adapting to Dynamic Network Sizes}
Computing the model distribution requires knowing the number of peers in the DHT ($N$) that is unknown to the DHT nodes.
Instead, we substitute an estimate $\hat{N}$ of the network size obtained from the network size estimator we described in \Cref{sec:netsize}.
A node can locally estimate the model distribution by substituting $\hat{N}$ instead of $N$ in \cref{equ:cdf_j_best,equ:pmf_avg}.
The complete censorship detection algorithm is specified in \Cref{alg:eclipse-detection-algorithm}.

It is important to note that the network size estimate does not depend on the closest peer IDs for the CID for which censorship detection is being performed. Instead, it is computed using the closest peer IDs to randomly chosen keys. Therefore, recall from \Cref{sec:netsize} that the attacker would require a number of Sybil peers of the order of the network size to bias the estimator. As a result, the detection remains robust under the censorship attack. 

\para{Choosing a detection threshold}
The detection threshold $\threshold$ is a per-node constant value.
A higher threshold results in more false negatives (some attacks go undetected, hence unmitigated) while a lower threshold results in more false positives (mitigation overhead when there is no attack).
In \Cref{sec:evaluation}, we evaluate the false positive and false negative rates for different thresholds and recommend a threshold that favors reducing false negatives.
Different nodes can however choose different thresholds according to their desired trade-off between the error rates.
In \Cref{sec:evaluation}, we show that a constant threshold suffices even as the network size varies.

\begin{algorithm}[t]%
    \caption{%
        Censorship Detection Algorithm; $\threshold$ is a pre-decided detection threshold
    }%
    \label{alg:eclipse-detection-algorithm}%
    \begin{algorithmic}[1]%
        \Procedure{Detection}{$\key$}
        \State $\algvar{peers} \gets \Call{Get20ClosestPeers}{\key}$
        \State $q \gets \operatorname{numPeersPerCPL}(\algvar{peers})/20$ \Comment{\cref{equ:pmf_emp}}
        \State $N \gets \Call{GetNetsizeEstimate}{ }$ \Comment{\cref{equ:avg-dist,equ:netsize-least-squares}}
        \State $p \gets \operatorname{computeModelDist}(N)$ \Comment{\cref{equ:cdf_j_best,equ:pmf_avg}}
        \State $\algvar{KL} \gets \operatorname{computeKL}(p, q)$ \Comment{\cref{equ:kl-divergence-definition}}
        \State \Return $\algvar{KL} > \threshold$ \Comment{true indicates attack}
        \EndProcedure
    \end{algorithmic}%
\end{algorithm}%

\vspacebeforesection
\section{Mitigation with Region-Based Queries}
\label{sec:mitigation}

Countering Sybil attacks in an open, decentralized system is challenging.
Traditionally, this problem is solved by binding identities to valuable resources (\eg using Proof of Work~\cite{dwork92proofofwork,baumgart2007s}), certificate authorities~\cite{castro_dht}, reputation systems~\cite{sybillimit, sybilguard, danezis2005sybil, whanau}, or diversifying the IP addresses of the peers of each node~\cite{total_eclipse}.
Proof of Work and IP address restrictions are not sufficient as our attack only requires a few Sybil peers ($e \approx 45$): these measures would only slightly increase the cost of the attack. On the other hand, certificate authorities and reputation systems hamper the decentralization and open participation model of IPFS.
Simply increasing the value of the number of closest peers contacted in a DHT query (currently $k=20$) also does not solve the problem as the attacker only needs to generate more Sybil peers to match the new number.
Another naive idea is that the providers modify the content by one bit to modify its CID. However, the new CID must be then advertised to potential downloaders to make the content publicly accessible. The attacker can then simply “follow” the new CIDs and continuously censor the content.
Further, modifying the content is unsuitable for immutable Web3.0 content (e.g., NFTs and DIDs) whose hash is already published on a blockchain.

The fundamental problem in countering the content censorship attack lies in the inability to classify \resolvers as honest or malicious. When a \downloader receives no \provider record or an inactive provider record from a \resolver (\ie, it is unable to find the referenced provider or the provider does not hold the content), this can be due to several reasons. For instance, the \resolver was offline, the record used to be correct but the \provider had since left, or there was a network failure.
As a result, the \downloader cannot draw any conclusions on the \resolver's correctness based on the received results, eliminating any attempts to gradually filter out malicious nodes by local scoring systems.

\begin{figure}[h]
    \centering
    \begin{tikzpicture}[]%
        \footnotesize%
        \draw[->] (0,0) -- (8,0) node[anchor=south] {keyspace};%
        \draw (0.5,0) node[circle, draw, fill=white, minimum size=3pt, inner sep=1pt] (p1) {1};%
        \draw (1.5,0) node[circle, draw, fill=white, minimum size=3pt, inner sep=1pt] (p2) {2};%
        \draw (2.5,0) node[circle, draw, fill=white, minimum size=3pt, inner sep=1pt] (p3) {3};%
        \draw (3.0,0) node[circle, draw, fill=white, minimum size=3pt, inner sep=1pt] (p4) {4};%
        \draw (3.3,0) node[circle, draw, fill=red!50!white, minimum size=3pt, inner sep=1pt] (sybil1) {S};%
        \draw (3.6,0) node[circle, draw, fill=red!50!white, minimum size=3pt, inner sep=1pt] (sybil2) {S};%
        \node at (4,0) (cid) {\includegraphics[width=0.5cm]{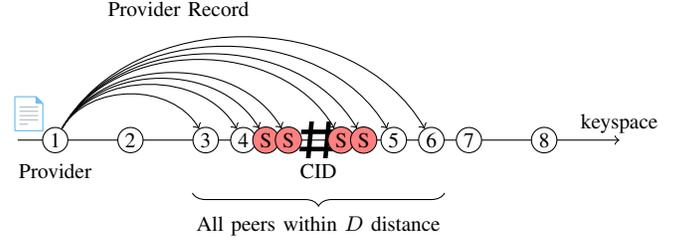}};%
        \draw (4.3,0) node[circle, draw, fill=red!50!white, minimum size=3pt, inner sep=1pt] (sybil3) {S};%
        \draw (4.6,0) node[circle, draw, fill=red!50!white, minimum size=3pt, inner sep=1pt] (sybil4) {S};%
        \draw (5,0) node[circle, draw, fill=white, minimum size=3pt, inner sep=1pt] (p5) {5};%
        \draw (5.5,0) node[circle, draw, fill=white, minimum size=3pt, inner sep=1pt] (p6) {6};%
        \draw (6,0) node[circle, draw, fill=white, minimum size=3pt, inner sep=1pt] (p7) {7};%
        \draw (7,0) node[circle, draw, fill=white, minimum size=3pt, inner sep=1pt] (p8) {8};%

        \node [below, yshift= -6pt] at (p1) {Provider};%
        \node [below, yshift= -6pt] at (cid) {CID};%
        \draw [decorate, decoration={brace, amplitude=5pt, raise=20pt}] (p6.east) -- (p3.west);%
        \node [below, yshift= -26pt] at ($(p6.east)!0.5!(p3.west)$) {All peers within $D$ distance};%
        \node [above left] at (p1) {\notoemoji[height=0.5cm]{page_facing_up}};%

        \draw [->] (p1) to [out=60,in=120] (sybil1);%
        \draw [->] (p1) to [out=60,in=120] (sybil2);%
        \draw [->] (p1) to [out=60,in=120] (sybil3);%
        \draw [->] (p1) to [out=60,in=120] (sybil4);%
        \draw [->] (p1) to [out=60,in=120] (p3);%
        \draw [->] (p1) to [out=60,in=120] (p4);%
        \draw [->] (p1) to [out=60,in=120] (p5);%
        \draw [->] (p1) to [out=60,in=120] (p6);%
        \node (pr) at (1,1.75) {};%
        \node [anchor=west] at (pr.east) {Provider Record};%
    \end{tikzpicture}

    \caption{Illustration of our mitigation: A provider record is sent to all peers within a region that contains $k$ honest peers on average ($k=4$ in this example), as estimated using the network size. This region contains $k$ honest peers even in the presence of Sybil peers, as honest peers are not removed.\vspaceaftercaption \vspaceaftercaption}
    \label{fig:mitigation}
\end{figure}

\para{Main idea}
The core observation behind our approach is that, while an attacker can spawn additional Sybil identities, it has no way of removing the honest ones from the network. As long as the \provider can send its provider record to the initial honest \resolvers, and the \downloader can communicate with these \resolvers, the censorship attack will be mitigated. To maintain communication with the initial honest \resolvers even during an attack, we propose \emph{region-based} DHT queries. Rather than communicating with the $k=20$ closest peers to a CID, that an attacker can easily control, we communicate with all the nodes in the hash space region that $k=20$ uniformly distributed peer IDs should cover (\Cref{fig:mitigation}).
The size of this region is calculated using the network size estimate and using the assumption that honest peer IDs are distributed uniformly over the key space.
Such an approach ensures that regardless of the number of Sybil nodes placed by an attacker, the \provider can store provider records on $\approx 20$ honest \resolvers, and the \downloader also communicates with $\approx 20$ honest \resolvers to reliably retrieve the correct provider records.
To prevent additional overhead when there is no attack, we run the region-based queries only when an attack is detected using the detection mechanism that we detailed in \Cref{sec:detection}.

\begin{algorithm}[t]
    \caption{Function to find all peers with a Common Prefix Length (CPL) $\geq \algvar{minCPL}$ with $\algvar{key}$} 
    \label{alg:region_queries}
    \begin{algorithmic}[1]
    \Procedure{FindByCPL}{$\algvar{key}, \algvar{minCPL}$}
        \State $\algvar{set} \gets \Call{GetClosestPeers}{\algvar{key}}$
        \State $\algvar{CPL} \gets \operatorname{minCommonPrefixLength}(\algvar{set}, \algvar{key})$
        \While{$\algvar{CPL} \geq \algvar{minCPL}$}
            \State $\algvar{qkey} \gets \algvar{key}[\mathbin{:} \algvar{CPL}] \mathbin\Vert \overline{\algvar{key}[\algvar{CPL}]} \mathbin\Vert \algvar{key}[\algvar{CPL}+1 \mathbin{:}]$
            \State $\algvar{set} \gets \algvar{set} \cup \Call{FindByCPL}{\algvar{qkey}, \algvar{CPL}+1}$
            \State $\algvar{CPL} \gets \algvar{CPL} - 1$
        \EndWhile
        \State $\operatorname{removeItemsWithPrefixLessThan}(\algvar{set}, \algvar{minCPL})$
        \State \Return $\algvar{set}$
        \EndProcedure
        \end{algorithmic} 
\end{algorithm}

\para{Algorithm}
The region-based query algorithm is described in \Cref{alg:region_queries}, with a sample execution in~\Cref{fig:region-based-illustration}.
The goal of this algorithm is to find all peer IDs that share a common prefix of at least $\algvar{minCPL}$ bits with $\key$.
Note that any two keys $\algvar{k}_1, \algvar{k}_2$ have a common prefix length (CPL) of at least $l$ iff the XOR distance between $\algvar{k}_1$ and $\algvar{k}_2$ is less than $2^{256-l}$.
Therefore, the common prefix requirement specifies a region of the key space with a distance $2^{256-\algvar{minCPL}}$ from $\key$.
To keep our mitigation compatible with the current version of \texttt{libp2p} DHT nodes, we use the same RPCs that the DHT nodes currently use.
Therefore, we build the algorithm using only calls to $\Call{GetClosestPeers}{\key}$ which obtains the 20 peer IDs that are the closest to $\key$, which is already available in \texttt{go-libp2p-kad-dht}~\cite{libp2p_github_get_closest_peers}.  We start with this primitive and then compute the common prefix length shared by $\key$ and all of its 20 closest peer IDs, which we note as $\CPL$.

Since we have found at least one peer ID with a common prefix length $\CPL$, we must have found all peer IDs with common prefix length $\geq \CPL+1$, as the latter are closer (in XOR distance) to $\key$ than the former (see step 0 in \Cref{fig:region-based-illustration}).
In the next step, we would like to find all peer IDs with common prefix $\geq \CPL$ with $\key$. 
Since we have already found all peer IDs with the prefix $\key[\mathbin{:} \CPL+1]$ (\ie, the first $\CPL+1$ bits match $\key$),
we only need to find all peers IDs with the prefix $\key[\mathbin{:} \CPL] \mathbin\Vert \overline{\key[\CPL]}$ (\ie the first $\CPL$ bits match $\key$ and the $(\CPL+1)$-th bit is different).
This is done recursively using our algorithm.
This step is repeated until all peer IDs with common prefix length $\geq \algvar{minCPL}$ with $\key$ have been found.

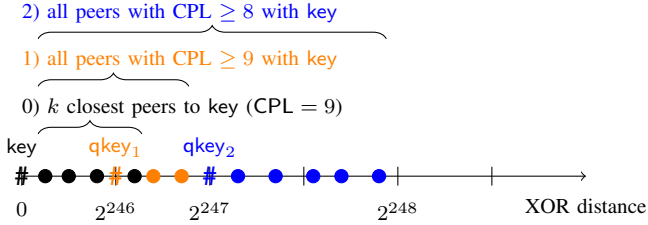
\begin{figure}[tb]%
    \centering%
    \begin{tikzpicture}[scale=1.25]%
        \footnotesize%
        \draw[->] (0,0) -- (6,0) node[anchor=north, yshift=-0.2cm] {XOR distance};%
        \foreach \x in {0,1,2,3,4,5} {%
            \draw (\x,0) ++(0,0.1) -- ++(0,-0.2);%
        }%
        \node [below] at (0,-0.2) {$0$};%
        \node [below] at (1,-0.2) {$2^{246}$};%
        \node [below] at (2,-0.2) {$2^{247}$};%
        \node [below] at (4,-0.2) {$2^{248}$};%
        \node [minimum size=3pt] at (0,0) (cid) {{\normalsize\textbf{\#}}};%
        \node [above, yshift=0.15cm] at (0,0) {$\algvar{key}$};%

        \draw (0.25,0) node[circle, draw, fill=black, minimum size=5pt, inner sep=0] (p1) {};%
        \draw (0.5,0) node[circle, draw, fill=black, minimum size=5pt, inner sep=0] (p2) {};%
        \draw (0.8,0) node[circle, draw, fill=black, minimum size=5pt, inner sep=0] (p3) {};%
        \draw (1.2,0) node[circle, draw, fill=black, minimum size=5pt, inner sep=0] (p4) {};%

        \draw [decorate, decoration={brace, amplitude=5pt, raise=16pt}] (p1.west) -- (p4.east);%
        \node [anchor=south west, yshift= 22pt, inner sep=0] at (0,0) {0) $k$ closest peers to $\key$ ($\CPL = 9$)};%

        \def\colorone{orange}%
        \node [minimum size=3pt] at (1,0) (qkey1) {{\normalsize\textcolor{\colorone}{\textbf{\#}}}};%
        \node [above, yshift=0.15cm] at (qkey1) {\textcolor{\colorone}{$\algvar{qkey}_1$}};%

        \draw (1.4,0) node[circle, draw=\colorone, fill=\colorone, minimum size=5pt, inner sep=0] (p5) {};%
        \draw (1.7,0) node[circle, draw=\colorone, fill=\colorone, minimum size=5pt, inner sep=0] (p6) {};%

        \draw [decorate, decoration={brace, amplitude=5pt, raise=34pt}] (p1.west) -- (p6.east);%
        \node [anchor=south west, yshift= 40pt, inner sep=0] at (0,0) {\textcolor{\colorone}{1) all peers with CPL $\geq 9$ with $\key$}};%

        \def\colortwo{blue}%
        \node [minimum size=3pt] at (2,0) (qkey2) {{\normalsize\textcolor{\colortwo}{\textbf{\#}}}};%
        \node [above, yshift=0.15cm] at (qkey2) {\textcolor{\colortwo}{$\algvar{qkey}_2$}};%

        \draw (2.3,0) node[circle, draw=\colortwo, fill=\colortwo, minimum size=5pt, inner sep=0] (p7) {};%
        \draw (2.7,0) node[circle, draw=\colortwo, fill=\colortwo, minimum size=5pt, inner sep=0] (p8) {};%
        \draw (3.1,0) node[circle, draw=\colortwo, fill=\colortwo, minimum size=5pt, inner sep=0] (p9) {};%
        \draw (3.4,0) node[circle, draw=\colortwo, fill=\colortwo, minimum size=5pt, inner sep=0] (p10) {};%
        \draw (3.8,0) node[circle, draw=\colortwo, fill=\colortwo, minimum size=5pt, inner sep=0] (p11) {};%

        \draw [decorate, decoration={brace, amplitude=5pt, raise=52pt}] (p1.west) -- (p11.east);%
        \node [anchor=south west, yshift= 58pt, inner sep=0] at (0,0) {\textcolor{\colortwo}{2) all peers with CPL $\geq 8$ with $\key$}};%
        
    \end{tikzpicture}%
    \caption{%
        An example illustration of $\textsc{FindByCPL}(\algvar{key}, 8)$ 
        (see \cref{alg:region_queries}). Note that any two keys $\algvar{k}_1, \algvar{k}_2$ have common prefix length (CPL) at least $l$ iff the XOR distance between $\algvar{k}_1$ and $\algvar{k}_2$ is less than $2^{256-l}$. Step 0: Find the $k$ ($4$ in this example) closest peers to $\key$ (shown in black). Calculate their minimum common prefix length with $\key$ ($\CPL = 9$ in this example). Step 1: $\textsc{FindByCPL}(\algvar{qkey}_1, 10)$ returns the green peers with CPL $\geq 10$ with $\algvar{qkey}_1$. Now we have found all peers with CPL $\geq 9$ with $\key$. Step 2: $\textsc{FindByCPL}(\algvar{qkey}_2, 9)$ returns the blue peers. Now we have found all peers with CPL $\geq 8$ with $\key$, and the algorithm terminates.\vspaceaftercaption \vspaceaftercaption
    }%
    \label{fig:region-based-illustration}%
\end{figure}%

While using this region-based query algorithm, we choose the value of $\algvar{minCPL}$ such that a region of the key space with common prefix length at least $\algvar{minCPL}$ with $\key=\cid$ contains at least $k=20$ honest peer IDs with high probability.
Suppose that there are a total of $N$ peer IDs in the DHT, distributed uniformly across the hash space. A common prefix length of at least $\algvar{minCPL}$ corresponds to a XOR distance $<2^{256 - \algvar{minCPL}}$. Then, the expected number of peer IDs in this region is $2^{\algvar{minCPL}} \times N$.
By setting $\algvar{minCPL} = \lceil \log_2\left(\frac{N}{k}\right) \rceil$, we have a region that contains $k$ honest peer IDs on average.
Given an estimate $\hat{N}$ of the network size (as described in \cref{sec:netsize}), we substitute $\hat{N}$ for $N$ to calculate the region size.
By a simple probabilistic bound, we can also extend this region to contain $k$ honest peer IDs with high probability.

\para{Cost Analysis}
By default, both \providers and \downloaders do one DHT lookup (using $\Call{GetClosestPeers}{\key}$) to obtain the list of $k=20$ closest peer IDs to $\key$. 
When a \provider or \downloader uses a region-based query, it does multiple lookups using $\Call{GetClosestPeers}{\cdot}$. 
The number of lookups required increases sub-linearly in the number of Sybil identities placed by an attacker (shown experimentally in \Cref{fig:mit-lookups-sybils}).
Importantly, operating a Sybil identity requires participating in the DHT routing and responding to keep-alive messages.
As a result, the cost for the attacker increases linearly with the number of Sybil identities.
When the target CID is not under attack, using the region-based query would still use more than one $\Call{GetClosestPeers}{\cdot}$ lookups, because honest peer IDs are distributed randomly, and therefore the chosen region might contain more than $20$ peer IDs.
To avoid this overhead when there is no attack, we run the region-based lookup only when the detection mechanism (\Cref{sec:detection}) detects an attack, and use the default lookup otherwise.
We evaluate the \provider's and \downloader's cost of the region-based queries (number of lookups and latency) and the attacker's cost in \Cref{sec:evaluation}.

\para{Correctness Analysis}
We prove that \Cref{alg:region_queries} indeed finds all peer IDs with a common prefix length of at least $\algvar{minCPL}$ with $\key$.
\begin{theorem}
    \label{thm:region-based-routing-proof}
    Assuming that $\Call{GetClosestPeers}{\key}$ returns the $20$ closest peer IDs to $\key$, $\Call{FindByCPL}{\algvar{key}, \algvar{minCPL}}$ returns all peer IDs with a common prefix of at least $\algvar{minCPL}$ bits with $\key$.
\end{theorem}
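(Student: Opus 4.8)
\textit{Proof proposal.}
The plan is to prove the statement by a structural induction that exploits the duality between common prefix length and XOR distance: two $256$-bit keys share a common prefix of at least $\ell$ bits exactly when their XOR distance is strictly below $2^{256-\ell}$. The first building block I would establish is a fact about a single $\Call{GetClosestPeers}{\algvar{k}}$ call: if the minimum common prefix length of the $20$ returned peers with $\algvar{k}$ is $\CPL$, then that call has already discovered \emph{every} peer ID in the DHT whose common prefix length with $\algvar{k}$ is at least $\CPL+1$. Indeed, the farthest of the $20$ returned peers realizes common prefix length exactly $\CPL$, so its distance lies in $[2^{255-\CPL},2^{256-\CPL})$; any peer with common prefix length $\ge \CPL+1$ has distance $<2^{255-\CPL}$, hence is strictly closer than that farthest peer and therefore sits among the $20$ closest. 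The same inequality chain yields the base case of the induction: if the minimum common prefix length among the $20$ closest peers to a query key is strictly less than the requested $\algvar{minCPL}$, then every peer with common prefix length $\ge \algvar{minCPL}$ with that key is strictly closer than the farthest of the $20$, hence already in the returned set; the \textbf{while} guard fails, so only the closing $\operatorname{removeItemsWithPrefixLessThan}$ runs, which trims exactly the peers of smaller common prefix length and leaves the claimed set.

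For the recursive case I would induct on the second argument, which is bounded above: each recursive call $\Call{FindByCPL}{\algvar{qkey},\CPL+1}$ is made with second argument $\CPL+1$, and the loop guard forces $\CPL \ge \algvar{minCPL}$, so this argument is strictly larger than the current $\algvar{minCPL}$; since no two $256$-bit strings share more than $256$ bits, once the argument exceeds $256$ the base case necessarily applies, so the recursion is well founded and terminates, and each \textbf{while} loop itself makes finitely many iterations. The combinatorial heart of the argument is a set decomposition. Writing $\algvar{qkey}_c$ for $\key$ with its $(c{+}1)$-th bit flipped and all other bits unchanged (exactly the key formed inside the loop when $\CPL=c$), a peer $p$ has $\operatorname{cpl}(p,\key)=c$ if and only if $\operatorname{cpl}(p,\algvar{qkey}_c)\ge c+1$, because both conditions say that $p$ agrees with $\key$ on the first $c$ bits and disagrees on bit $c{+}1$. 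Consequently $\{p:\operatorname{cpl}(p,\key)\ge c\}$ is the disjoint union of $\{p:\operatorname{cpl}(p,\key)\ge c+1\}$ and $\{p:\operatorname{cpl}(p,\algvar{qkey}_c)\ge c+1\}$, these parts being disjoint since a peer in the first agrees with $\key$ on bit $c{+}1$ while a peer in the second does not.

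I would then assemble these pieces into a loop-invariant argument. After the initial $\Call{GetClosestPeers}{\key}$, the first building block gives that $\algvar{set}$ contains all peers with common prefix length $\ge \CPL+1$ with $\key$. I would maintain the invariant that at the top of the iteration in which $\algvar{CPL}=c$, $\algvar{set}$ contains every peer with common prefix length $\ge c+1$ with $\key$: it holds initially at $c=\CPL$, and one iteration forms $\algvar{qkey}_c$, calls $\Call{FindByCPL}{\algvar{qkey}_c,c+1}$, which by the induction hypothesis returns exactly the peers with common prefix length $\ge c+1$ with $\algvar{qkey}_c$, and unions it into $\algvar{set}$; by the decomposition identity $\algvar{set}$ now contains all peers with common prefix length $\ge c$ with $\key$, which is precisely the invariant for the next iteration, $c-1$. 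When the loop exits, $\algvar{CPL}$ has reached $\algvar{minCPL}-1$, so $\algvar{set}$ contains all peers with common prefix length $\ge \algvar{minCPL}$ with $\key$; the final $\operatorname{removeItemsWithPrefixLessThan}$ deletes the finitely many remaining peers of smaller common prefix length, leaving exactly the desired set, and (by the same induction, applying the first building block at every recursive call) no desired peer was ever absent.

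The main obstacle I anticipate is not a single deep step but keeping the bookkeeping airtight: the common-prefix/distance duality together with the ``farthest of the $20$'' argument must be stated with the correct strict and non-strict inequalities, since the $\CPL$ versus $\CPL+1$ off-by-one is exactly where such an argument typically breaks, and the decomposition identity and the loop invariant must be phrased so that they compose cleanly through the recursion. I would also flag two boundary cases to dispatch explicitly: when the DHT holds fewer than $20$ peers, $\Call{GetClosestPeers}{}$ simply returns all of them and the base case is immediate; and the tacit assumption, which should be made explicit, that distinct peer IDs have distinct XOR distances to any fixed key, so that ``the $20$ closest peers'' and ``the farthest among them'' are unambiguous.
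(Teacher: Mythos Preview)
Your proposal is correct and follows essentially the same route as the paper: a base case when the initial $\CPL < \algvar{minCPL}$, the loop invariant that $\algvar{set}$ contains all peers with common prefix length at least $\CPL+1$ with $\key$, the decomposition of the region via the bit-flipped key $\algvar{qkey}$, and induction on the recursive call. Your version is in fact more careful than the paper's on the termination argument for the recursion and on the edge cases (fewer than $20$ peers, unambiguity of ``closest''), which the paper leaves implicit.
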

\begin{proof}
    We prove this claim through induction.
    For the base case, if there are $< 20$ peer IDs with a common prefix length of at least $\algvar{minCPL}$ with $\key$, then $\Call{GetClosestPeers}{\key}$ must return at least one peer with a common prefix length $ < \algvar{minCPL}$ with $\key$. Therefore, $\CPL < \algvar{minCPL}$, hence the function returns all peer IDs that have a common prefix length of at least $\algvar{minCPL}$.

    Otherwise, $\CPL \geq \algvar{minCPL}$. Since we have found at least one peer with a common prefix length $\CPL$, we have found all peers with a common prefix length $\geq \CPL + 1$, that is all peers with the prefix $\key[:\CPL+1]$. Thus, we create a new key $\algvar{qkey}$ which has the prefix $\key[:\CPL] \mathbin\Vert \overline{\key[\CPL]}$. By induction, we assume that $\Call{findByCPL}{\algvar{qkey}, \CPL+1}$ returns all peers with prefix $\key[\mathbin{:} \CPL] \mathbin\Vert \overline{\key[\CPL]}$. Together, we now have all peers with the prefix $\key[\mathbin{:} \CPL]$. After subtracting $1$ from $\CPL$, we maintain the invariant that we have found all peers with prefix $\key[\mathbin{:}\CPL+1]$. If the loop doesn’t quit, then we continue to find peers with one more bit in the common prefix in every iteration. If the loop quits, this means that $\CPL + 1 \leq \algvar{minCPL}$, therefore we have found all peers with common prefix length of at least $\algvar{minCPL}$ as promised. 
\end{proof}

Even if $\Call{GetClosestPeers}{\key}$ does not return \textit{all} of the 20 closest peer IDs to $\key$, the algorithm will still terminate (because $\CPL$ decreases at every iteration) but may not find all the peers with a common prefix length of at least $\algvar{minCPL}$.
Since we restrict ourselves to build the region-based lookup using $\Call{GetClosestPeers}{\key}$, our method is accurate only in the cases when $\Call{GetClosestPeers}{\key}$ is accurate.

\vspacebeforesection
\section{Evaluation}
\label{sec:evaluation}

In this section, we evaluate the cost of the attack and the effectiveness of our detection and mitigation methods.

\vspacebeforesection
\subsection{Setup}
\label{sec:evaluation_setup}

We perform our experiments on the live IPFS networks. All our attacks targeted only content that we created. No content provided by other users was affected. Although we did not require any special permissions from Protocol Labs to execute our attacks, we informed them for responsible disclosure.

In the evaluation, we use three types of DHT nodes: \textit{(i)} the malicious Sybil nodes, \textit{(ii)} a \provider node hosting the content that we created, and \textit{(iii)} a \downloader node that attempts to resolve the target CID and fetch its content. 
All these nodes were hosted on a single AWS $\mathsf{t3.xlarge}$ instance with $4$ vCPUs and $16$ GiB memory.
The attacker node is implemented as a custom DHT client using \texttt{libp2p}~\cite{libp2p_github}. Our mitigation and detection methods are also implemented on top of \texttt{libp2p}. 
Except for the experiment of \Cref{fig:eclipse_timeline}, the \provider sends the provider record (\ie, it provides the content) after the Sybils are launched so that we avoid waiting for $\twarmup=48$ hours for the attack to take effect.

\vspacebeforesection
\subsection{DHT Lookup Accuracy}

\begin{figure}[tb]%
    \centering%
    \includegraphics[width=\linewidth]{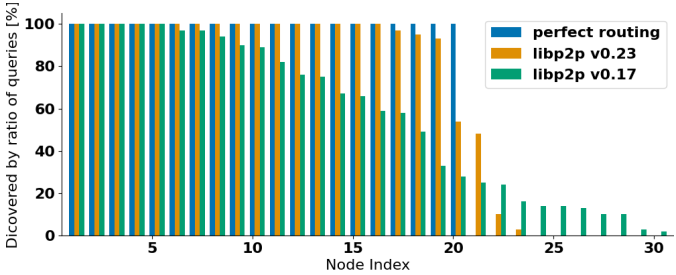}%
    \caption{%
        Distribution of nodes discovered during a DHT query by different versions of IPFS clients, across 100 experiments. \vspaceaftercaption \vspaceaftercaption%
    }%
    \label{fig:20closest}%
\end{figure}%

The DHT query $\Call{GetClosestPeers}{\key}$ is ideally expected to return the $k=20$ closest peer IDs to the queried $\key$.
\Cref{fig:20closest} shows the number of $\Call{GetClosestPeers}{\key}$ queries in which each peer close to the queried $\key$ is discovered. The x-axis shows the index of nodes ordered by their distance to the queried $\key$ ($1$ is the closest node). The y-axis shows the fraction of queries in which each node was discovered. The results are averaged over $100$ experiments. The set of peers obtained is compared with the true $20$ closest peers (`perfect routing') that were obtained from a network crawler. We observed that when different nodes perform this query, and when the same node does it multiple times, the set of $20$ peers received in response is not always the same. This effect is present in both versions of \texttt{libp2p} considered here but the query responses are more consistent in the newer version. To mitigate this effect, the attacker has to use a larger number of Sybils $e > 20$ for high attack effectiveness $\aeff \to 100\%$.

\vspacebeforesection
\subsection{Attack}

\begin{figure}[tb]%
    \centering%
    \includegraphics[width=\linewidth]{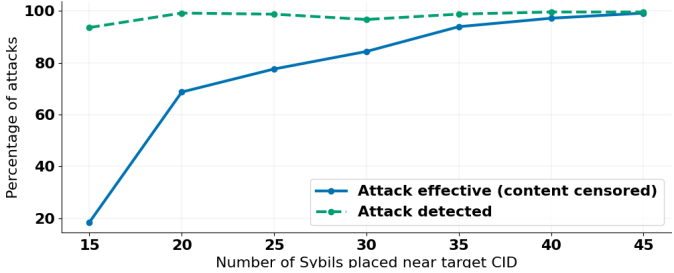}%
    \caption{%
        Percentage of attacks in which \textit{i)} the attack was effective and \textit{ii)} the detection algorithm detected the attack. \vspaceaftercaption%
    }%
    \label{fig:attack_success_rate}%
\end{figure}%

We follow by determining the number of Sybil identities $\numSybils$ required to achieve high attack effectiveness $\aeff$.
Since the DHT queries do not consistently return the same set of peers, \downloaders may discover some honest \resolvers when there are only $\numSybils = 20$ Sybils.
\Cref{fig:attack_success_rate} shows the success
rate of the attack as the number of Sybil peers varies.
Even when $\numSybils < 20$, the attack sometime succeeds because too few honest \resolvers are contacted, and they may be offline.
Involving more Sybil peers increases the chance of a successful attack but also
proportionally increases the cost for the attacker.
With $e=45$ Sybils, the attack succeeds with $99\%$ probability. We use this value for the rest of the experiments.

 \begin{figure}[tb]%
    \centering%
    \includegraphics[width=\linewidth]{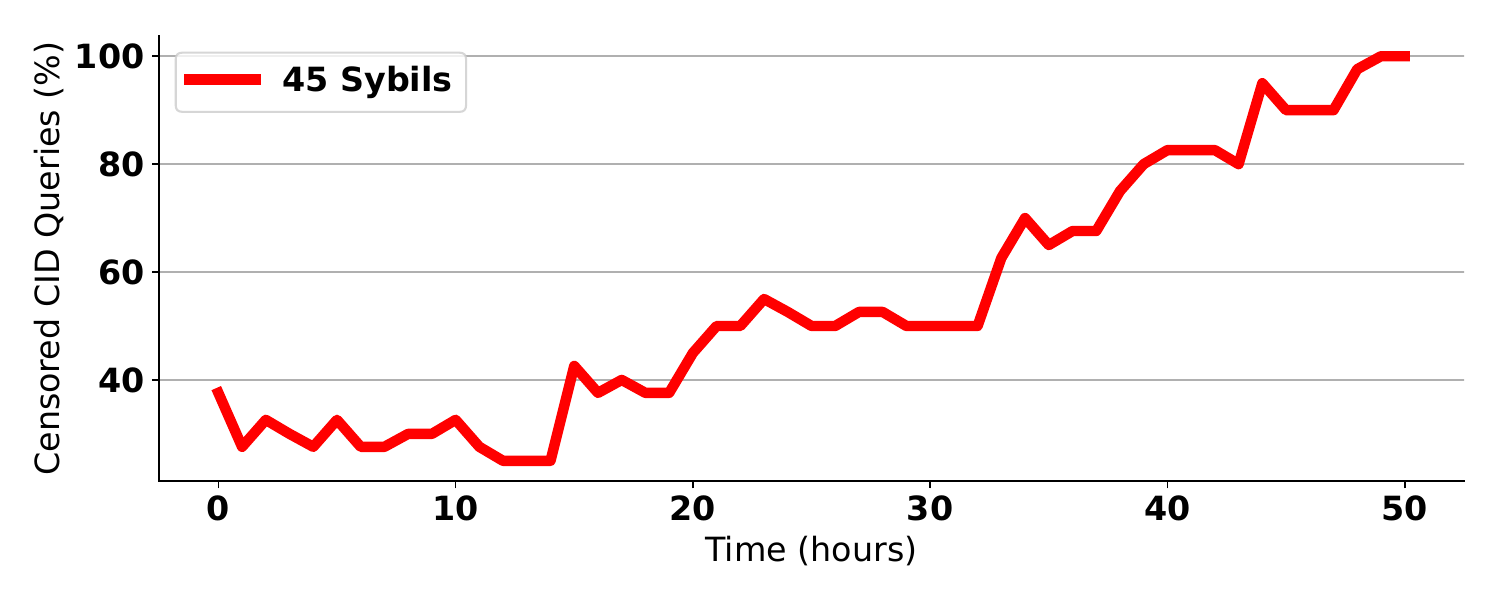}%
    \caption{%
        Rate of censored CID queries (\%) over time, from the start of launching the attack (when content has been provided before adding Sybil nodes). \vspaceaftercaption \vspace{-3mm} %
    }%
    \label{fig:eclipse_timeline}%
\end{figure}%
When the Sybil nodes are placed before the target content is added to the network, the attack takes effect immediately ($\twarmup=0$). However, we also explore a more realistic timeline in which content has been provided before launching the attack. %
\Cref{fig:eclipse_timeline} presents the evolution of the attack effectiveness $\aeff$ over time. For each censored CID, we spawn $5$ \downloader nodes per hour.
We stop the experiment when all queries have been unsuccessful for at least $3$ hours. 
Immediately after starting the attack, the effectiveness reaches $\aeff = 30\%$. This is caused by a portion of the \downloaders not encountering any honest \resolvers on their path toward the attacked region. The effectiveness steadily increases over time. Multiple spikes (\eg after 12h) are caused by a portion of \resolvers (running older IPFS versions) dropping the \provider records. 
The attack takes full effect after $48$ hours, which is when \resolvers drop the \provider record as per the current IPFS version.
Based on this result, we assume a maximum warmup time $\twarmup=48$h in the cost calculations that follow.

\begin{figure}[t]
    \centering
    \includegraphics[width=\linewidth]{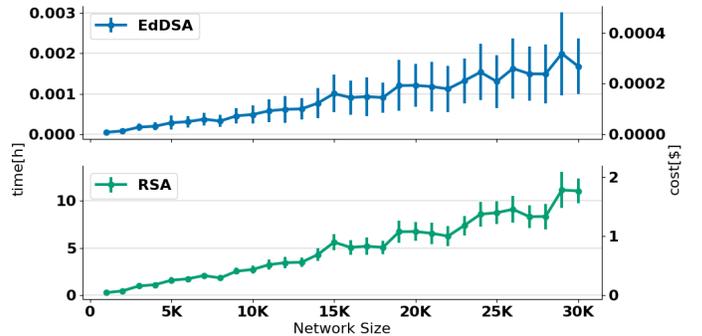}
    \caption{Average time and cost required to generate necessary Sybil identities. \vspaceaftercaption \vspace{-3mm} %
    }
    \label{fig:sybil_generation_time}
\end{figure}

Finally, we explore the \emph{cost} of performing the attack.
The AWS instances running the Sybils cost $\coper=0.16\$$ per hour.
\Cref{fig:sybil_generation_time} presents the time and monetary cost of generating $\numSybils = 45$ Sybil identities for both cryptosystems present in IPFS. For readability, we omit the number of iterations $s$, as it is the same for both methods and proportional to the cost/time. The monetary cost $\cgen$ and the generation time increase linearly with the network size. In larger networks, the distance between the closest honest \resolver and the target CID decreases and the generation algorithm requires more iterations. EdDSA is significantly faster than RSA and the generation time remains below $12$s translating into $0.0005$\$, even for the largest evaluated network with $n=30,000$ nodes. Generating Sybils using RSA, while slower, is feasible even for moderately resourceful attackers.

The peak CPU utilization of $30\%$ occurred only during the generation of Sybil private keys. The maximum bandwidth utilization of the machine hosting $\numSybils=45$ Sybils was $4.67$ Mbps (both inbound and outbound) when no requests were made for the censored CID. Taking the cost of generating Sybil identities $\cgen=0.0005$\$, the longest warmup time $\twarmup=48$h, and the duration of the attack $\teff$ hours, the total cost of the attack using EdDSA is given by $\catt = \cgen  + (\twarmup + \teff) \times \coper  = 7.68 + \teff \times 0.16\$$.

\vspacebeforesection
\subsection{Detection}
\begin{figure}[htbp]
    \centering
    \includegraphics[width=\linewidth]{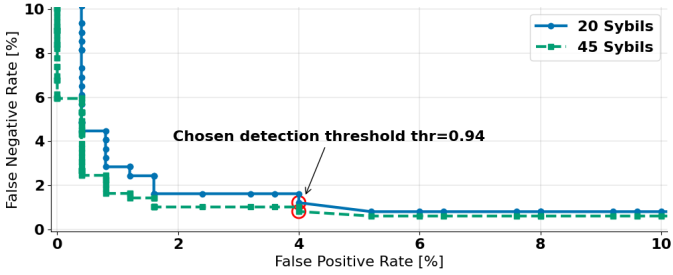}
    \caption{False positive and false negative rates of the detection method, for different detection thresholds. \vspaceaftercaption}%
    \label{fig:fp-fn}
\end{figure}

We follow by investigating our attack detection mechanism, which calculates the KL divergence between a model and empirical peer ID distributions, and flags a CID as under attack if the divergence is above a certain threshold $\threshold$.
In \Cref{fig:fp-fn}, we plot the false positive and false negative rates of the detection method for different choices of the detection threshold (the lower the threshold, the more false positives, but the fewer false negatives).
We see that increasing the number of Sybils makes the attack easier to detect, with fewer false negatives.
Each DHT node can choose its own detection threshold, based on its desired false positive and false negative rates.
However, it is reasonable for the default implementation to choose a threshold that favors fewer false negatives, thereby mitigating most attacks, at the cost of a small overhead of running the mitigation even when there is no attack.
In the following experiments, we choose a threshold of
$0.94$ which achieves $4.4\%$ false positives and $0.81\%$ false negatives
(circled in \Cref{fig:fp-fn}).

\begin{figure}[tb]%
    \centering%
    \includegraphics[width=\linewidth]{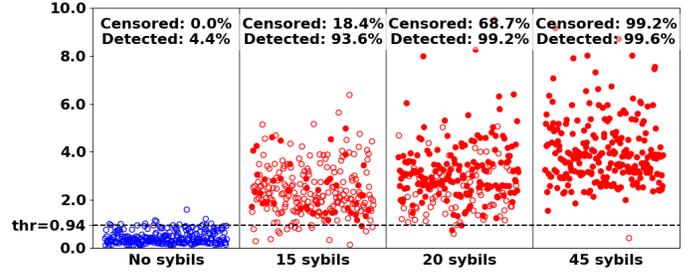}%
    \caption{%
        KL divergence for varying numbers of Sybils $e$. \vspaceaftercaption
    }%
    \label{fig:kl-threshold}%
\end{figure}%

In \Cref{fig:kl-threshold}, we show the results of the detection method for $250$ experiments each with a different number of Sybil peers. Each point represents the result of a single experiment and the y-coordinate is the observed KL divergence value. The percentage of successful attacks (solid circles), where the provider record was not found by the \downloader, and the percentage of experiments that were flagged as attacks are indicated. If the number of Sybils launched by the attacker is decreased, more attempted attacks go undetected (false negatives), but we see that these attacks are not successful either.
This effect is also summarized in \Cref{fig:attack_success_rate}.

\begin{figure}[tb]%
    \centering%
    \includegraphics[width=\linewidth]{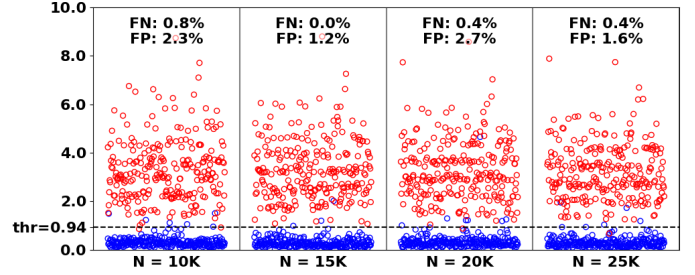}%
    \caption{%
        KL divergence for varying network sizes $N$, and the false negative and false positive rates. \vspaceaftercaption
    }%
    \label{fig:diff_netsizes}%
\end{figure}%

To run experiments for different network sizes $N$, we simulate a DHT network by generating $N$ random peer IDs. Requests for a CID are resolved by simply finding the $20$ closest peer IDs in this simulated network. Network size estimation and detection are performed using these simulated responses. In \Cref{fig:diff_netsizes}, we show that the KL divergence metric is robust to changes in the network size. This is because our detection method automatically calculates the model distribution based on the estimated network size.
Therefore, a fixed detection threshold can be used even as the network size changes.
Note that the KL divergence values in such a simulation tend to be lower than those measured on the real network because lookups in the real network do not always result in the correct $20$ closest peers.

\vspacebeforesection
\subsection{Mitigation}

We evaluate the performance and the overhead of using our mitigation mechanism in the live IPFS network. Mitigation of a censorship attack on $\cid$ is successful if its \downloaders successfully retrieve at least one valid provider record using a $\Call{FindProviders}{\cid}$ operation. Our mitigation mechanism uses region-based queries 
with a region containing $20$ peer IDs on average
for both $\Call{FindProviders}{\cid}$ and $\Call{Provide}{\cid}$ when an attack is detected. %

For different number of Sybils $\numSybils$, we launch censorship attacks on 50 different CIDs. During each attack, we launch $\numSybils$ Sybil peers, and after waiting for one minute, we provide the target CID from a separate DHT instance. We then test the reachability of the content from ten different \downloaders. %
As a baseline comparison, we also provide results for $\numSybils=0$ whenever appropriate. 

 \begin{figure}[b]
    \centering
    \includegraphics[width=\linewidth]{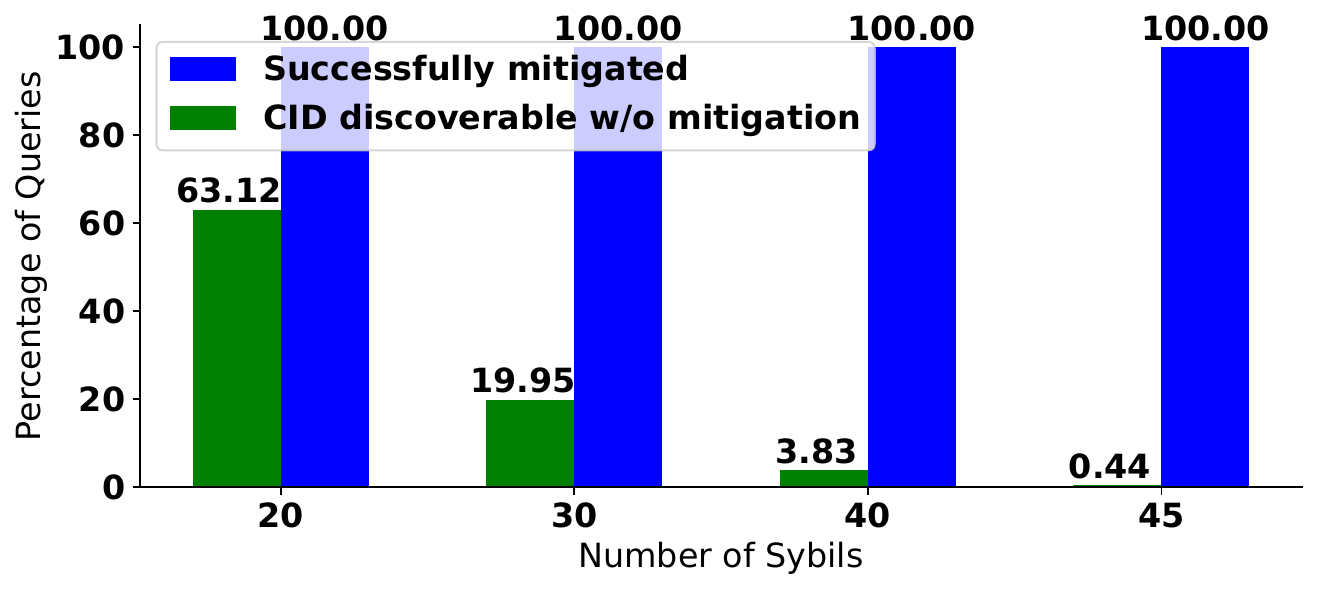}
    \caption{Percentage of attacks that are mitigated. \vspaceaftercaption}
    \label{fig:mit-success-sybils}
\end{figure}

\para{Mitigation effectiveness} \Cref{fig:mit-success-sybils} illustrates the mitigation effectiveness $\meff$ for different numbers of Sybils $e$, and provides results without the mitigation for comparison. %
Our mechanism mitigates all of the detected attacks for all the evaluated Sybil numbers. Importantly, for $e=45$, the number of \downloaders receiving their content increases from only $0.44\%$ without the mitigation to $100\%$ when it is activated.

 \begin{figure}[t]
    \centering
    \includegraphics[width=\linewidth]{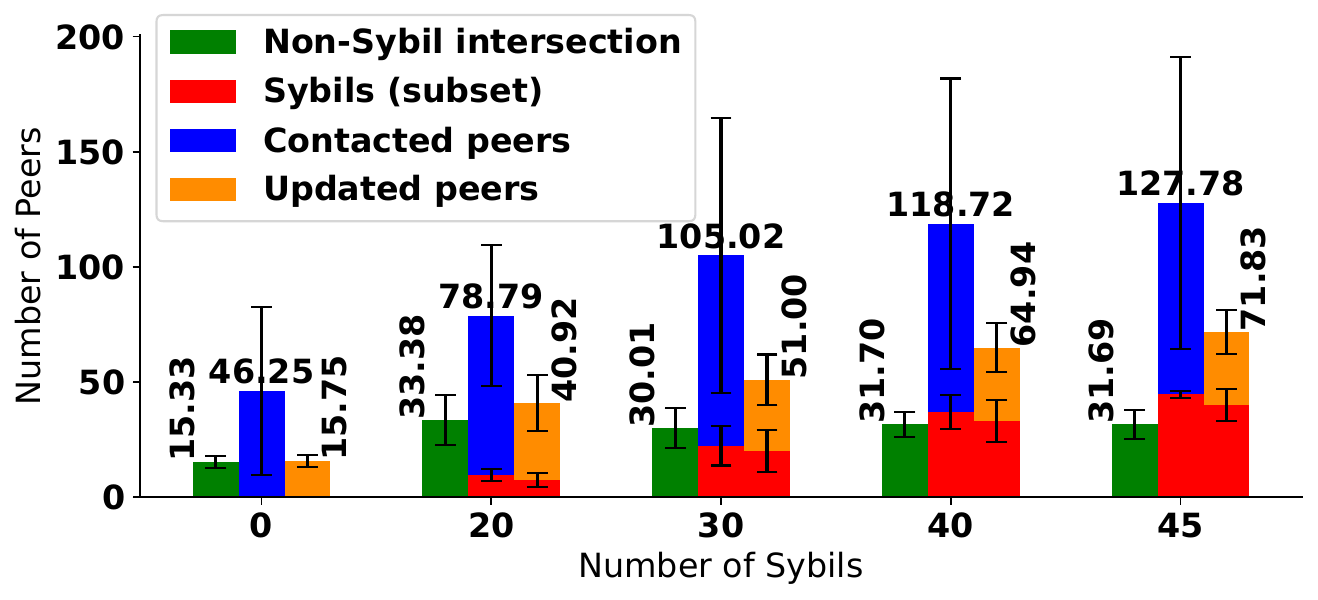}
     \caption{Number of peers contacted and updated by region-based queries. \vspaceaftercaption}%
     \label{fig:mit-overhead-sybils}
    \end{figure}

To better understand the region-based queries, \Cref{fig:mit-overhead-sybils} presents the average size of the following sets of peers: (i) \textit{contacted:} peers encountered by the \downloaders during \Call{FindProviders}{\cid}, (ii) \textit{updated:} \resolvers that obtain the provider record for $\cid$, and (iii) \textit{intersection:} non-Sybil peers in the intersection of the two aforementioned sets of peers. We omit unresponsive peers in both sets. The number of contacted peers increases with the increasing number of Sybils. Higher peer density in the target region causes additional lookups that reach honest peers located nearby. For the same reason, the number of updated peers increases as well.

The number of contacted peers is substantially higher than the number of updated peers, because the former includes not only just the peers within the target region but also peers encountered during the DHT walk toward that target region. More importantly, the average size of the non-Sybil intersection of contacted and updated peers oscillates around 30. This is higher than the expected value of 20 (\ie, the region size used by the mitigation mechanism) because the current network size estimation slightly underestimates the size of the IPFS network. Higher intersection size ensures successful mitigation even when multiple honest nodes in the region are offline.

\begin{figure}[b]
    \centering
    \includegraphics[width=\linewidth]{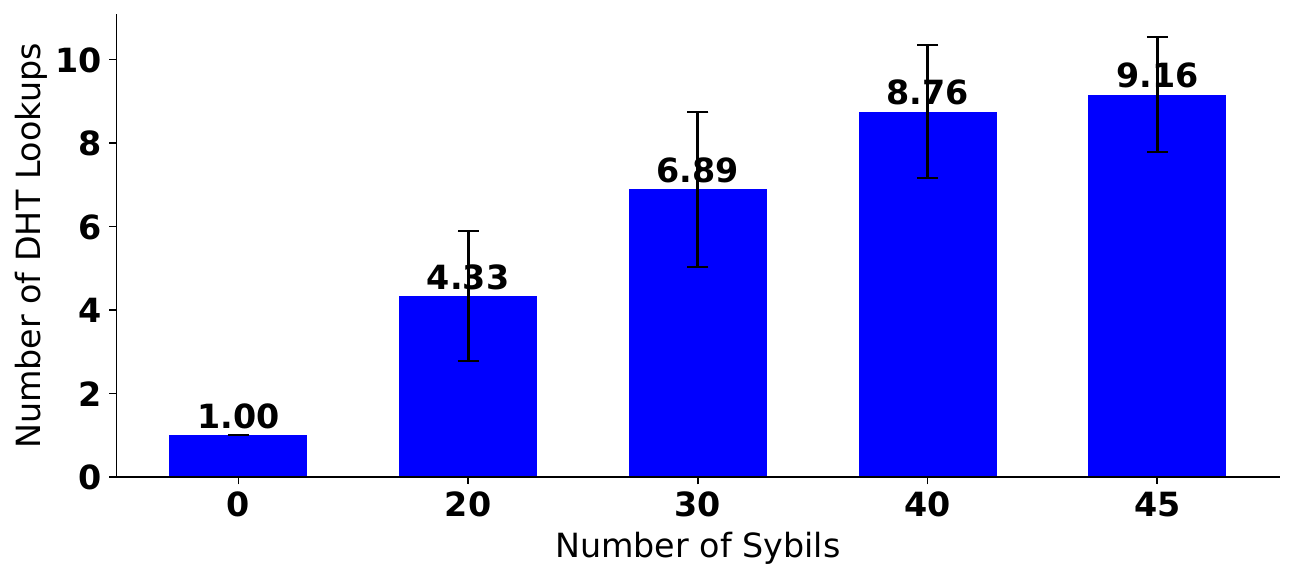}
     \caption{The number of DHT lookups involved in a region-based query. \vspaceaftercaption}
    \label{fig:mit-lookups-sybils}
\end{figure}

\para{Mitigation overhead} In \Cref{fig:mit-lookups-sybils}, we assess the overhead of the mitigation mechanism in terms of the number of DHT lookups involved in region-based queries. While the number of lookups is up to $9$ times higher than for an un-attacked network, the overhead increases sub-linearly with the number of Sybils, and is incurred only when an attack is detected.

We then measure the latency of the $\Call{Provide}{}$ as well as $\Call{FindProviders}{}$ operations. The average latencies of these operations are shown in \Cref{tab:latency} when under attack and otherwise, for both the default and the mitigation modes.
In general, $\Call{FindProviders}{}$ waits until it finds $20$ distinct providers, or it has contacted all \resolvers. This is followed in both the default and mitigation modes. However, the first \provider record is obtained much earlier. Although finding one \provider record may be enough in the optimistic case, the first \provider record that the \downloader obtains may not be correct (it may be old or may be sent by a malicious peer) and therefore the \downloader might need to wait longer to retrieve the content.
Even though the mitigation mode requires several DHT queries while the default mode uses only one, the latency of the mitigation mode is not much higher than the default mode. This is because the subsequent queries are much faster than the first query.

\begin{table}[t]
    \newcolumntype{G}{>{\raggedright\arraybackslash} m{0.2\linewidth} }
    \newcolumntype{H}{>{\raggedright\arraybackslash} m{0.15\linewidth} }
    \newcolumntype{I}{>{\raggedright\arraybackslash} m{0.227\linewidth} }
    \renewcommand{\arraystretch}{1.2}

    \begin{tabular}{GHII}
    
    \toprule
    & & Attack (45 Sybils) & No attack (0 Sybils) \\
    \midrule
    \multirow{2}{*}{Provide} & Default & \textcolor{red}{---} & 15914 ms \\
    & Mitigation & 24780 ms & 27099 ms \\
    \midrule
    \multirow{2}{*}{Find $20$ providers} & Default & \textcolor{red}{---} & 26483 ms \\
    & Mitigation & 28930 ms & 27756 ms \\
    \midrule
    \multirow{2}{*}{Find $1$ providers} & Default & \textcolor{red}{---} & 647 ms \\
    & Mitigation & 712 ms & 329 ms \\
    \bottomrule
    
    \end{tabular}
    \caption{Average latency (milliseconds) of Provide and FindProviders operations with and without our mitigation, during attack and no attack. A red dash indicates that the operation was unsuccessful due to the attack. \vspaceaftercaption}
    \label{tab:latency}
\end{table}

%

\vspacebeforesection
\section{Related Work}
\label{sec:related}

In this section, we review previous work focusing on solving the problem of attacks based on Sybil identities in decentralized systems and their limitations. For a more complete view of the Sybil attack and countermeasures, we refer the readers to a survey by Urdaneta \etal~\cite{survey11}.

CFS~\cite{dabek2001wide} is a storage system built over the early Chord DHT~\cite{stoica2003chord}. CFS uses node ID authentication to prevent a node from taking a specific position in the DHT ring.
CFS clients check that the node ID is the result of the hash of its IP address, plus a number from a small range, \eg, 1 to 10. However, this solution is less effective when an attacker has access to a large number of IP addresses (\eg using cloud providers) and it is incompatible with a large number of peers placed behind a NAT, which is the case in IPFS~\cite{trautwein2022design}. S-Chord~\cite{byzantine_chord} is an extension of Chord that can provide routing guarantees despite the presence of a number of Byzantine nodes in the network but it increases the number of messages and latency for routing by a factor logarithmic in the number of nodes. S/Kademlia~\cite{baumgart2007s} proposes Proof-of-Work (PoW) mechanisms to rate-limit the generation of new peer IDs. However, while PoW slows down the attacker, it does not fully mitigate the problem, makes the system less sustainable, and is problematic for constrained devices.

Some mechanisms make additional assumptions on trusted certificate authorities (CAs) to sign peer IDs~\cite{castro_dht} or use social trust networks~\cite{sybillimit, sybilguard, danezis2005sybil, whanau} to detect or prevent Sybil attacks.
Even though most deployed systems do rely on hardcoded bootstrap nodes, relying on CAs to control and certify all memberships would be considered incompatible with the open and decentralized environment of IPFS. 

Awerbuch and Scheideler~\cite{awerbuch2009} propose that the peer IDs of all honest peers in a DHT be rotated whenever a new peer joins. This can prevent an attacker's peers from concentrating in one region of the key space. Unfortunately, this solution is particularly expensive in a dynamic network where nodes constantly join and leave the system.

Cholez \etal~\cite{cholez2010detection} introduce a Sybil detection mechanism based on KL-divergence followed by removing suspected peers from the set of $k$ closest peers.
We adopt their detection mechanism but do not remove any peers.
Since in IPFS, the CID is the hash of the content, Sybil peers cannot cause a downloader to accept incorrect content. Therefore, removing Sybil peers is not required. Instead, our mitigation ensures that providers and downloaders continue to contact enough honest peers.

Recently, Protocol Labs introduced network indexers~\cite{ipfsindexer} allowing to resolve a CID to a list of providers in Filecoin~\cite{fisch2018scaling}. While the usage of cloud infrastructure makes the system highly efficient and resistant to Sybil attacks, the indexer is fully centralized introducing the risks of censorship and can constitute a single point of failure.

\para{Eclipse attacks}
Multiple attacks based on node \emph{eclipsing} target decentralized systems.
An attacker attempts to control all the neighbors of a specific target node in the overlay.
This differs from the content censorship attack we discuss in this paper, which targets a specific entry of a distributed directory.

Eclipse attacks are documented for Bitcoin~\cite{heilman2015eclipse,saad2021syncattack} or Ethereum~\cite{henningsen2019eclipsing, marcus2018low} allowing to \emph{partition} the blockchain network and prevent a miner from participating fairly.
The recent Gethlighting attack shows this is possible by only eclipsing a subset of a node's neighborhood~\cite{heopartitioning}.
A recent attack targets the IPFS DHT~\cite{henningsen2020mapping, total_eclipse} and allows isolating a single node from the network.
As a result, the IPFS DHT was augmented with table eviction policies and with rules restricting the number of peers with the same IP address in a routing table.
This makes the attack impractical even for a resourceful attacker~\cite{total_eclipse}.
Our censorship attack targets content rather than single nodes and works despite these changes. However, we also build upon this past work as we rely on eclipse resistance for our mitigation techniques.

Wang \etal~\cite{wang2008attacking} is an early example of a content censorship attack, targeting the Kad network, an implementation of Kademlia used in the eDonkey~\cite{edonkey} and eMule~\cite{emule} content-sharing networks.
These attacks exploit a vulnerability in the Kad implementation: peers were not authenticated based on their peer IDs, so an attacker could impersonate another peer.
This vulnerability does not exist in the IPFS network where peer IDs are derived by hashing the peer's public key, and where messages are signed with the corresponding secret keys.
Our attack is much simpler than the one of Wang \etal and does not require this vulnerability.

\section{Discussion and Future Work}\label{sec:discussion}

Currently, IPFS does not provide an admission mechanism and \resolvers will accept any provider records until they run out of storage. After that, depending on the implementation, the \resolvers may crash or flush older, legitimate provider records. The time required for this attack depends on the bandwidth available at each \resolver and the amount of free storage. While a deeper analysis is out of the scope of this paper, introducing an admission mechanism based on the diversity of incoming traffic has the potential to eliminate this vulnerability.

While our mitigation technique (\Cref{sec:mitigation}) fully protects against the CID censorship attack, it involves querying the Sybil nodes for provider records. The Sybil nodes may return a large number of fake provider records so that \downloaders keep trying them, thereby slowing down the resolution. The impact of such an attack can be reduced if the \downloader only tries a single provider record obtained from each \resolver and prefers records obtained from \resolvers with diverse IP addresses (\ie from different /24 networks). Any attempts to significantly delay the resolution would sharply increase the attacker's cost. 

Our mitigation technique relies on region-based DHT queries. For easy integration with the current IPFS network, those queries are built on top of a regular Kademlia DHT that does not natively support them. This results in slightly higher overhead and increases resolution time. Adapting the core internals of the DHT and optimizing them for region-based queries might speed up the process and reduce its overhead. However, such deep changes make incremental deployment and compatibility with the existing version challenging. 

During this project, we initially considered an approach where \providers register their provider records on all the nodes encountered on the path towards \resolvers. Such a solution increases the chance of an honest \downloader receiving a correct provider record before reaching the region with the Sybil nodes. However, the mechanism does not provide resistance for \downloaders located close to the CID in the DHT hash space. Furthermore, the on-path registration significantly increases the storage cost of holding provider records for the entire network even when no attack is being conducted. 

The IPFS DHT, and thus our mitigation and detection mechanisms, depends on the correctness of the DHT routing. However, a powerful attack may try to disturb DHT operations by deploying a large number of uniformly distributed Sybils that only return other Sybils when queried. While costly, such an attack could be devastating for the entire ecosystem. We advocate for additional future work that improves the DHT resistance to such attacks and is practical to deploy in large-scale networks.

\vspacebeforesection
\section{Conclusion}
\label{sec:conclusion}

We presented a successful censorship attack on IPFS. We showed that an attacker can easily make  any content undiscoverable in the network by strategically placing a small number of Sybil identities in the DHT. The effectiveness of the attack was confirmed by removing multiple, specifically crafted content from the live IPFS network. Importantly, our attack has a constant, negligible cost regardless of the popularity of the target content.

The attack has a significant impact on the IPFS network itself as it threatens the core functionality of the platform. However, it also impacts other systems that rely on the availability of content stored on IPFS. This includes thousands of decentralized applications and oracles deployed on various blockchains. Moreover, the DHT flaw that led to the attack is present in other currently deployed DHT-based systems.

We also presented a robust detection technique allowing us to detect the attack in real time without communication overhead and to activate our proposed mitigation mechanisms when necessary. 

Finally, we introduce a practical mitigation technique based on region-based DHT queries. While many others mitigation techniques have been proposed, none of them are practical enough to be deployed in an open decentralized system. Our approach is the first that can be deployed incrementally in a live network without requiring changes to the core DHT protocol.
It also does not require additional components and does not incur significant overhead. Importantly, our mitigation technique prevents the attack without blocking any nodes or using unreliable reputation systems. We believe that our mitigation technique can be easily integrated into other DHT-based systems.

\section*{Acknowledgements}
This work was partly done when Srivatsan Sridhar was consulting for Protocol Labs. At Stanford University, Srivatsan Sridhar's research is funded by a gift from the Ethereum Foundation.

\bibliographystyle{IEEEtranS}
\bibliography{refs}

\clearpage

\appendices
\crefalias{section}{appendix}
\crefalias{subsection}{subappendix}
\crefalias{subsubsection}{subsubappendix}

\section{Artifact Appendix}
\label{sec:ae-appendix}

In this work, we implement a censorship attack on the IPFS network, a method to detect the attack, and a method to mitigate the attack.
Our artifact includes the implementations of these three components and experiments to measure their effectiveness, accuracy, and cost.

\smallskip
\noindent
Artifact Outline (key aspects):

{
\footnotesize
\dirtree{%
.1 /.
.2 README.md.
.2 common.
.3 go-libp2p-kad-dht \begin{minipage}[t]{5cm}
(detection \& mitigation)
\end{minipage}.
.3 go-libp2p-kad-dht-Sybil \begin{minipage}[t]{5cm}
(Sybil DHT implementation)
\end{minipage}.
.3 Sybil\_DHT\_Nodes \begin{minipage}[t]{5cm}
(Sybil node implementation)
\end{minipage}.
.3 kubo \begin{minipage}[t]{5cm}
(standard IPFS implementation)
\end{minipage}.
.2 experimentCombined \begin{minipage}[t]{5cm}
(main experiment \& results)
\end{minipage}.
.3 README.md.
.2 python \begin{minipage}[t]{5cm}
(plots \& other simulations)
\end{minipage}.
.3 README.md.
}
}

\subsection{Description \& Requirements}
\label{sec:ae-appendix-requirements}

\subsubsection{How to access}
\label{sec:ae-appendix-access}
The artifact is available online
at the link \url{\artifactlink}.
The detection and mitigation parts are also available on Github: \url{\githublink},
and
are scheduled to be deployed in the official release of \path{go-libp2p-kad-dht}.
Our mitigation will be deployed on a per-client basis (not network-wide) and therefore, the experiments in this artifact are expected to remain reproducible as they attack our own \providers and \downloaders that do not use the mitigation.

\subsubsection{Hardware dependencies}
\label{sec:ae-appendix-hardware}
Our experiments require a machine with a public IP address which must allow incoming TCP connections on several ports to allow other IPFS peers to connect to our Sybil peers.
Our experiments were run on a machine rented from AWS.
The recommended instance type is \verb|t3.xlarge| which has a 2nd generation Intel Xeon Scalable Processor (3.1 GHz) with 4 vCPUs, 16 GB memory, and 5 Gbps peak bandwidth.
While the compute and memory requirements of our experiments are modest, a high peak bandwidth is required to ensure good connectivity of the Sybils and good attack effectiveness.
Our artifact includes instructions in the README on how to run the experiments on any other machine with these requirements.

\subsubsection{Software dependencies} 
\label{sec:ae-appendix-software}
Recommended operating system: Ubuntu 22.04. Required software: Go v1.19.10, Python 3.10, gcc, Make.

\subsubsection{Benchmarks} 
None.

\subsection{Artifact Installation \& Configuration}
\label{sec:ae-appendix-installation}

For the artifact evaluation, the setup begins with logging in to our provided AWS machine via SSH using the instructions given in the README file. Detailed instructions for each of the following steps are given in the top-level README file.

Network setup: In firewall settings, incoming TCP connections on ports 4001, 5001, 63800-63850 must be enabled.

Install software requirements: Install Go, Python, gcc, and make. We provide a file requirements.txt to help install all required Python packages.

Build and initialize IPFS: We provide the source code of kubo, the IPFS client written in Go, in our artifact. After compiling the source code, an IPFS node must be initialized.
The IPFS node runs in the background during our experiments and is used to obtain information from the network such as the closest peers to the target CID.

\subsection{Experiment Workflow}
\label{sec:ae-appendix-workflow}

Our main experiment (E1) has two phases: data collection and plotting.

Data collection: In this phase, we run the attack for different numbers of Sybils, different target CIDs, and different \downloader clients. We collect measurements regarding the attack's success, detection results, and mitigation success each time. The code for this step along with instructions is given in the folder \path{experimentCombined/}.

Data processing and plotting: In this phase, we process the collected measurements and generate the plots in the paper (\Cref{fig:attack_success_rate,fig:fp-fn,fig:kl-threshold,fig:mit-success-sybils,fig:mit-lookups-sybils,fig:mit-overhead-sybils}). The code and instructions for this step are in the folder \path{python/}.

\subsection{Major Claims}
\label{sec:ae-appendix-claims}

\begin{itemize}
    \item (C1): Our censorship attack using $e=45$ Sybil nodes blocks $99\%$ of users' content requests. This is illustrated in \Cref{fig:attack_success_rate}.
    \item (C2): Our detection mechanism achieves a false negative rate of 0.81\% and a false positive rate of 4.4\% for our chosen detection threshold (\Cref{fig:fp-fn}). Most attacks that were not detected were also not successful in censoring the content, and the detection rate improves as the attack effectiveness increases (\Cref{fig:kl-threshold}). 
    \item (C3): Our mitigation leads to successful content discovery 100\% of the time (\Cref{fig:mit-success-sybils}) by sending \provider records to a constant number of honest peers even as the number of Sybils increases (\Cref{fig:mit-lookups-sybils}). Moreover, the overhead of our mitigation increases sub-linearly with the number of Sybils (\Cref{fig:mit-overhead-sybils}). Claims (C1), (C2), and (C3) are proven by experiment (E1).
    \item (C4): Our censorship attack incurs a low cost for the attacker. The required keys can be generated on commodity hardware within $20$ seconds (for EdDSA) or 2 hours (for RSA). This is proven by the experiment (E2) whose results are illustrated in \Cref{fig:sybil_generation_time}.
\end{itemize}

\subsection{Evaluation}
\label{sec:ae-appendix-evaluation}

\subsubsection{Experiment (E1)}
[Detection and Mitigation] [4 compute-hours]:
In a single run of this experiment, we do the following: We create a new file, compute its CID, generate Sybil identifiers based on the CID, and launch Sybil nodes. Then, we launch two DHT nodes, a \provider and a \downloader. The \provider \emph{provides} the file, the \downloader attempts to find \providers for the file and runs the detection. We record the success of the attack and the detection result. Then, the \provider provides the file with the mitigation enabled, the \downloader attempts to find \providers also with the mitigation enabled, and we record the mitigation success, the number of DHT lookups performed by the mitigation, and the number of honest and Sybil peers contacted and successfully updated with \provider records. This whole process is repeated for 
10 different files.
This experiment is then repeated for 0, 20, 30, 40, and 45 Sybils. Each run takes 4-5 minutes on average. Hence, we scale down the number of runs for each experiment from 100 to 10 and run it for fewer values of the number of Sybils so that the experiment completes within 4 hours.

\textit{[Preparation]}
None beyond \Cref{sec:ae-appendix-installation}.

\textit{[Execution]}
Change the working directory to \path{experimentCombined/}, then build and run the experiment as per instructions in \path{experimentCombined/README.md}. Running the experiment stores the results in the specified output directory.

\textit{[Results]}
The folder \path{python/} contains code to generate \Cref{fig:attack_success_rate,fig:fp-fn,fig:kl-threshold,fig:mit-success-sybils,fig:mit-lookups-sybils,fig:mit-overhead-sybils} from the experiment results. For reference, we also provide the results corresponding to the plots in the paper in \path{experimentCombined/detection_results} and \path{experimentCombined/mitigation_results}.

\subsubsection{Experiment (E2)}
[Keys Generation Time] [10 compute-hours]: This experiment generates the required Sybil keys for different sizes of the network and analyzes the generation time.

\textit{[Execution]}
Run two sets of experiments measuring generation time for RSA and EDDSA:
\begin{lstlisting}[language=bash,basicstyle=\footnotesize]
cd python/
go run measureGenerateSybilKeys rsa > timing_rsa.csv
go run measureGenerateSybilKeys eddsa > timing_eddsa.csv
\end{lstlisting}
We provide our results files for reference (generating RSA keys takes a while).
\begin{lstlisting}[language=bash,basicstyle=\footnotesize]
./simulation_results/sample_rsa.csv
./simulation_results/sample_eddsa.csv
\end{lstlisting}

\textit{[Results]}
The commands below will reproduce the graph in \Cref{fig:sybil_generation_time}.
\begin{lstlisting}[language=bash,basicstyle=\footnotesize]
python3 plot_key_generation_time.py
\end{lstlisting}

\subsection{Customization}
Experiment (E1) provides arguments to customize the number of Sybils, the number of clients, and the number of CIDs for which to run the experiment.

\subsection{Notes}

The key generation time is heavily hardware dependent. The result might thus differ from the ones shown in \Cref{fig:sybil_generation_time}.
The experiment time might be shortened by adjusting the number of tries per network size. However, this comes at the price of losing the precision.

In the artifact, we include the code and instructions for generating all other figures in the paper as well.

\end{document}